\setlist[itemize]{leftmargin=*}
  \providecommand\BibTeX{{%
    \normalfont B\kern-0.5em{\scshape i\kern-0.25em b}\kern-0.8em\TeX}}}
\begin{document}
\fancyhead{}

\title{U-rank: Utility-oriented Learning to Rank\\ with Implicit Feedback}

\author{Xinyi Dai$^1$, Jiawei Hou$^1$, Qing Liu$^2$, Yunjia Xi$^1$, Ruiming Tang$^2$, }
\author{Weinan Zhang$^1$,  Xiuqiang He$^2$, Jun Wang$^3$, Yong Yu$^1$}
\affiliation{$^1$Shanghai Jiao Tong University, $^2$Huawei Noah's Ark Lab, $^3$University College London}
\email{{daixinyi, hjw99868, wnzhang, yyu}@sjtu.edu.cn, {liuqing48, tangruiming, hexiuqiang1}@huawei.com, jun.wang@cs.ucl.ac.uk}

\renewcommand{\shortauthors}{X. Dai, et al.}

\begin{abstract}
Learning to rank with implicit feedback is one of the most important tasks in many real-world information systems where the objective is some specific utility, e.g., clicks and revenue. However, we point out that existing methods based on probabilistic ranking principle do not necessarily achieve the highest utility. To this end, we propose a novel ranking framework called \textit{U-rank} that directly optimizes the expected utility of the ranking list.
With a position-aware deep click-through rate prediction model, we address the attention bias considering both query-level and item-level features. Due to the item-specific attention bias modeling, the optimization for expected utility corresponds to a maximum weight matching on the item-position bipartite graph. We base the optimization of this objective in an efficient Lambdaloss framework, which is supported by both theoretical and empirical analysis. 
We conduct extensive experiments for both web search and recommender systems over three benchmark datasets and two proprietary datasets, where the performance gain of \textit{U-rank} over state-of-the-arts is demonstrated. Moreover, our proposed \textit{U-rank} has been deployed on a large-scale commercial recommender and a large improvement over the production baseline has been observed in an online A/B testing.
\vspace{-2pt}
\end{abstract}

\begin{CCSXML}
<ccs2012>
   <concept>
       <concept_id>10002951.10003317.10003338.10003343</concept_id>
       <concept_desc>Information systems~Learning to rank</concept_desc>
       <concept_significance>500</concept_significance>
       </concept>
 </ccs2012>
\end{CCSXML}

\ccsdesc[500]{Information systems~Learning to rank}
\vspace{-4pt}
\keywords{Learning to Rank, Utility Maximization, Position Bias, Implicit Feedback}
\vspace{-4pt}

\settopmatter{printacmref=false, printfolios=false}

\maketitle

{\fontsize{8pt}{8pt} \selectfont
\textbf{ACM Reference Format:}\\
Xinyi Dai, Jiawei Hou, Qing Liu, Yunjia Xi, Ruiming Tang, Weinan Zhang, Xiuqiang He, Jun Wang and Yong Yu. 2020. \textit{U-rank}: Utility-oriented Learning to Rank with Implicit Feedback. In \textit{Proceedings of the 29th ACM International Conference on Information and Knowledge Management (CIKM ’20), October19–23, 2020, Virtual Event,  Ireland.} ACM, NY, NY, USA, 8 pages. https://doi.org/10.1145/3340531.3412756}

\section{Introduction}
Ranking is the core of information retrieval.
In the traditional web search scenario, learning to rank (LETOR) methods are proposed to optimize the ranked list based on human-annotated relevance labels~\cite{liu2011learning}. 
Typically these methods sort the documents according to their probability of relevance in descending order, according to the famous probabilistic ranking principle (PRP)~\cite{robertson1977probability}.
Due to the lack of annotated labels, recently, many works have focused on learning to rank via implicit feedback, such as user's click data, which is timely and personalized.
These works are also based on PRP, where the relevance is estimated from implicit feedback through counterfactual methods~\cite{10.1145/2911451.2911537, Joachims2016UnbiasedLW, 10.1145/3159652.3159732,10.1145/3209978.3209986, 10.1145/3308558.3313447}.
Besides the traditional web search scenario, nowadays, ranking is also an important part of many real-world applications, including recommender systems~\cite{karatzoglou2013learning}, online advertising~\cite{tagami2013ctr} and product search~\cite{karmaker2017application}. 
In these applications, specific utility metrics (such as clicks, conversions and revenue, etc.) are proposed, by which the quality of a ranked list is evaluated. 

In many existing works, LETOR algorithms are derived on the basis of PRP, and then evaluated on some utility metrics~\cite{karmaker2017application, zhao2019recommending}. 
However, we find that the PRP based ranking framework does not necessarily bring the highest utility in reality.
To be more specific, PRP is basically correct for items with large differences in relevance estimation. 
However, for two items with close relevance estimation, putting the item which is more sensitive to position change at the higher position will bring a higher expected utility, even if it is slightly less relevant. 
To provide a persuasive example, we show the average click curve of five popular apps from a mainstream App Store in the right panel of Figure~\ref{fig:data_analysis}.
Consider a non-personalized case for simplicity that we recommend App 1, App 2, and App 3 to one user.
If the apps are sorted by PRP, the ranked list will be App 1, App 2, and App 3 by their relevance in terms of click-through rate (CTR). However, the optimal ranked list with the maximum utility should be App 1, App 3, and App 2, since the utility gain of promoting App 3 from the 3rd to 2nd position is 0.019, which is larger than the utility loss 0.010 of dragging App 2 from the 2nd to 3rd position. 
As can be seen, sorting items by relevance may fail to achieve the highest utility in some situations, which is actually quite common in industrial scenarios. Therefore, we aim to optimize the objectives that are directly related to the utility based on user's implicit feedback.

\begin{figure}[t]
\centering
\begin{minipage}[t]{0.175\textwidth}
\centering
\includegraphics[width =\textwidth]{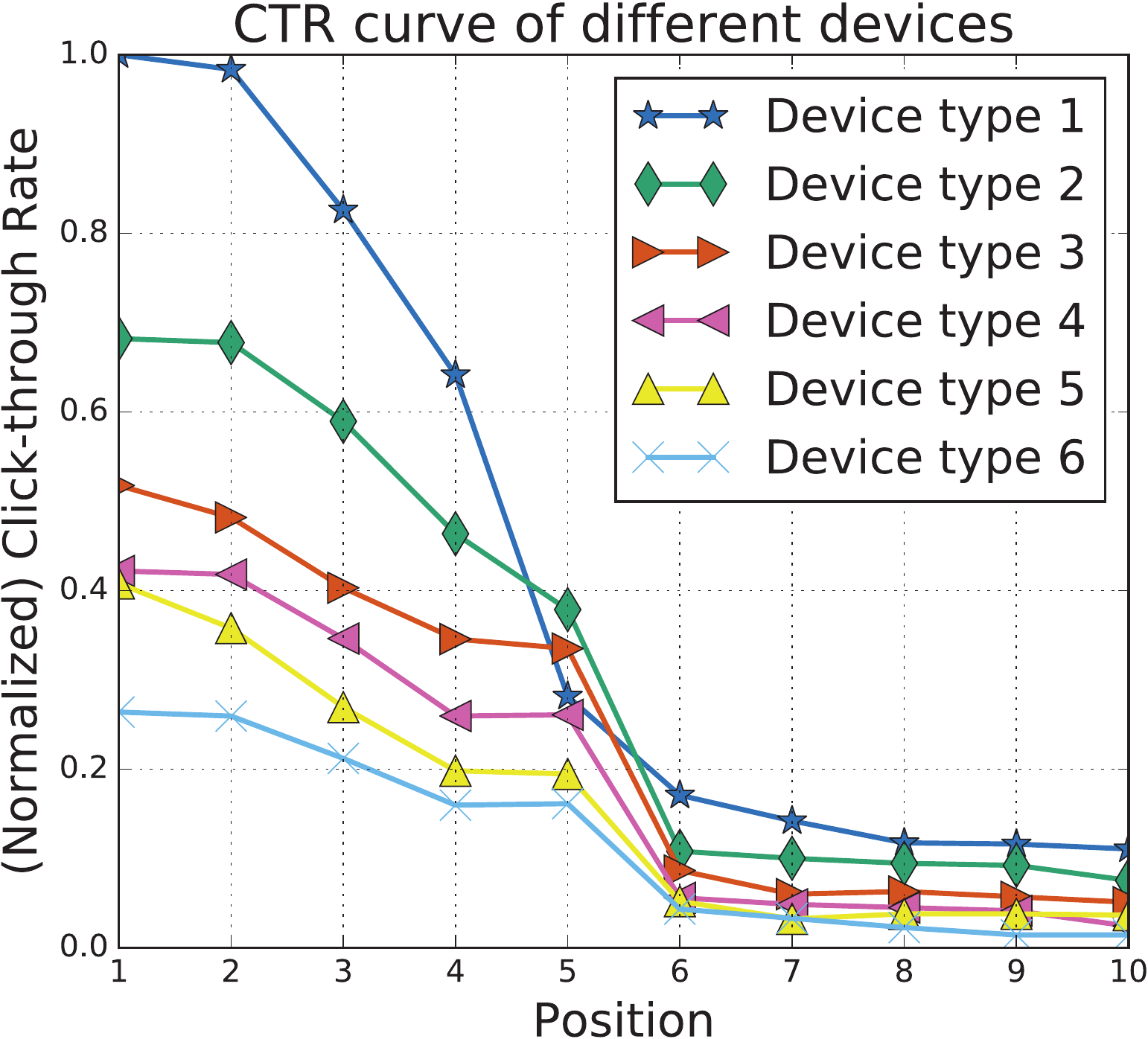}
\label{fig:query-pos}
\end{minipage}
\begin{minipage}[t]{0.295\textwidth}
\centering
\includegraphics[width =\textwidth]{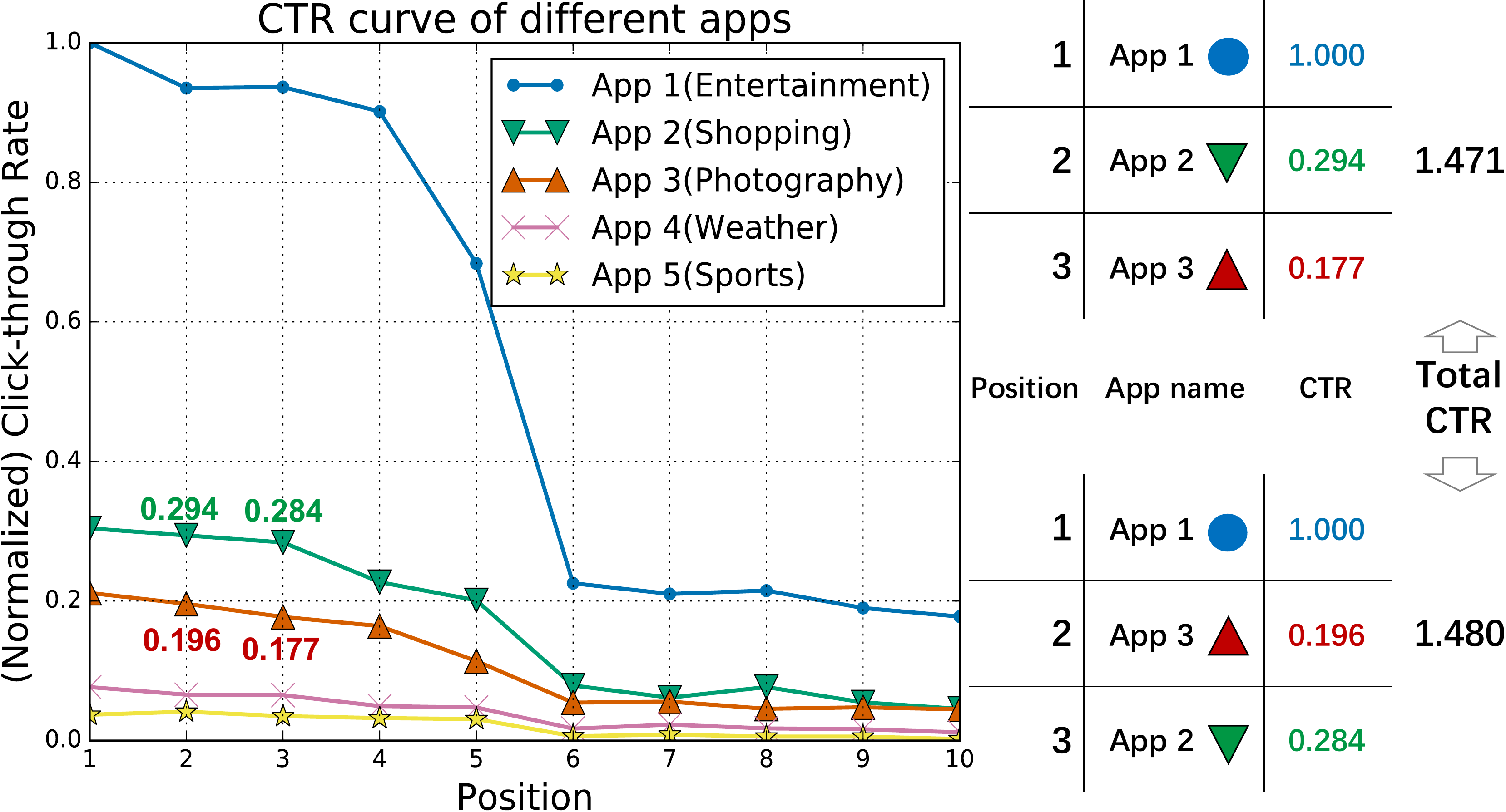}
\label{fig:item-pos}
\end{minipage}
\vspace{-15pt}
\caption{The CTR analysis w.r.t. query/item features. The data is collected through a 120 days' click log on random recommendation traffic in a mainstream App Store.}
\label{fig:data_analysis}
\vspace{-16pt}
\end{figure}

Optimizing the utility metric like CTR of the whole ranked list is not as easy as it seems. 
One direct solution might be estimating \textit{only one unique} 
CTR of each query-item pair and then optimizing the certain metric w.r.t. the whole list using the estimated CTR~\cite{wu2018turning}. 
However, bias will be introduced in this solution since user's CTR is not a static property like relevance. To be clear, for the same query-item pair, the CTR might change with its presented position. As shown in Figure~\ref{fig:data_analysis}, the CTR decreases as the presented position goes from top to bottom, and moreover, the magnitude of the decrease is different among items and device types.

In order to design effective utility-oriented algorithms, we need to figure out why this phenomenon happens and then investigate how to deal with it.
The decrease of user's CTR mainly results from the decrease of user attention, which is supported by eye-tracking studies~\cite{lorigo2008eye,LORIGO20061123}. 
Most existing works have treated such attention bias as position bias~\cite{10.1145/2911451.2911537, Joachims2016UnbiasedLW}, i.e., more attention is paid to the top positions than the bottom ones. In the literature, position bias is considered to be decorrelated with the ranked items, i.e., makes the same effect on all items \cite{PBM,cascade}, which is generally correct in the traditional \textit{10 blue links} scenario. Under such assumption, following PRP achieves the goal of the highest expected utility since the click curves of different positions across different items have the same shape despite the different scales.

However, we argue that in many real-world applications, a user's attention on items does not only depend on the positions but also the item attributes and the user contexts.
For the App recommendation case as demonstrated in Figure~\ref{fig:data_analysis}, visual difference in the thumbnail of a product or the preview frame of videos leads to item-specific attention bias. In web search, an example of item-specific attention bias is vertical bias, commonly observed when the page contains vertical search results (such as images, videos, maps, etc.). For example, \citet{metrikov2014whole} found that an image in search result can raise CTR and flatten the click curve at the same time. A visually attractive content, like a vertical search result or an item with a fancy thumbnail, can still attract user's attention even it is placed at a lower position, leading to a flatter click curve.
In other words, such visually attractive results are less sensitive to the position change. In the example above, CTR of App 2 is less sensitive in whether placing it in position 2 or position 3 compared to App 3. 
Placing items of which CTR is more sensitive to position change at top positions often leads to a higher utility. 
Besides, query-level features like device type, as shown in Figure~\ref{fig:data_analysis}, also leads to different attention biases.
Hence, to obtain unbiased CTR estimation, we need to exploit both the item-level and the query-level features to model the dependency between click and position.

Based on these considerations, in this work, we propose a ranking framework called \textit{U-rank} that directly optimizes expected utility from implicit feedback. Instead of ranking according to PRP, we first derive a new list-wise learning objective, of which the expectation is the utility metric we want to maximize. 
Then to obtain an unbiased estimation of the expected utility, we address the attention bias considering both the query-level and item-level features with a position-aware deep CTR model. Finally, to efficiently optimize the expected utility, we formulate it as an item-position matching problem as shown in Figure~\ref{fig:bigraph}, and learn a scoring function towards the best matching through pairwise permutations inspired by Lambdaloss framework~\cite{lambdaloss}, which reduces the complexity in inference stage from $O(N^3)$ to $O(N)$. Theoretical analysis demonstrates that we solve an upper bound problem of the matching problem.

We conduct thorough experiments on three benchmark LETOR datasets and a large-scale real-world commercial recommendation dataset to verify the effectiveness of \textit{U-rank}. Further, \textit{U-rank} has been deployed on the recommender system of a mainstream App Store, where a 10-day online A/B test shows that \textit{U-rank} achieves an average improvement of 19.2\% on CTR and 20.8\% on conversion rate over the production baseline.

\section{Related Work}
Generative click models are introduced to study user browsing behavior and extract unbiased relevance feedbacks from click data. For example, Position-based model (PBM)~\cite{PBM} assumes that a click only depends on the position and the relevance of the document. Cascade model~\cite{cascade} assumes that user browses a search web page sequentially from top to bottom until a relevant document is found. Following these two classical click models, more sophisticated ones (e.g., UBM~\cite{10.1145/1390334.1390392}, DBN~\cite{10.1145/1526709.1526711}, CCM~\cite{10.1145/1526709.1526712} and NCM~\cite{10.1145/2872427.2883033}) have been proposed. 
These click models estimate the relevance of each item in a point-wise manner instead of considering the relative order of the items as in pairwise or listwise approaches.
Recently, a new line of research, referred to as counterfactual methods, utilizes inverse propensity score (IPS) weighting to address position bias in a learning to rank framework. \citet{10.1145/2911451.2911537} and \citet{Joachims2016UnbiasedLW} proposed the IPW-based framework of debiasing click data in a learning to rank framework. In both works, the propensity estimation relies on randomizing search results displayed to users, which obviously degrades users' search experience. Considering this, \citet{10.1145/3289600.3291017} proposed PBM to estimate propensity without Intrusive Interventions. CPBM ~\cite{Fang2018InterventionHF}, on the basis of PBM, learns a query-dependent propensity estimation. However, multiple rankers are required to learn, which makes them inconvenient to deploy in real-world applications. Besides, another branch of unbiased learning to rank works~\cite{10.1145/3159652.3159732,10.1145/3209978.3209986, 10.1145/3308558.3313447} jointly learn the propensity model with a relevance model, which results in biased estimation of propensity unless the relevance estimation is very accurate.

 \section{Problem Formulation}\label{section:preliminary}
\begin{figure}[t]
	\centering
	\includegraphics[width=0.45\textwidth]{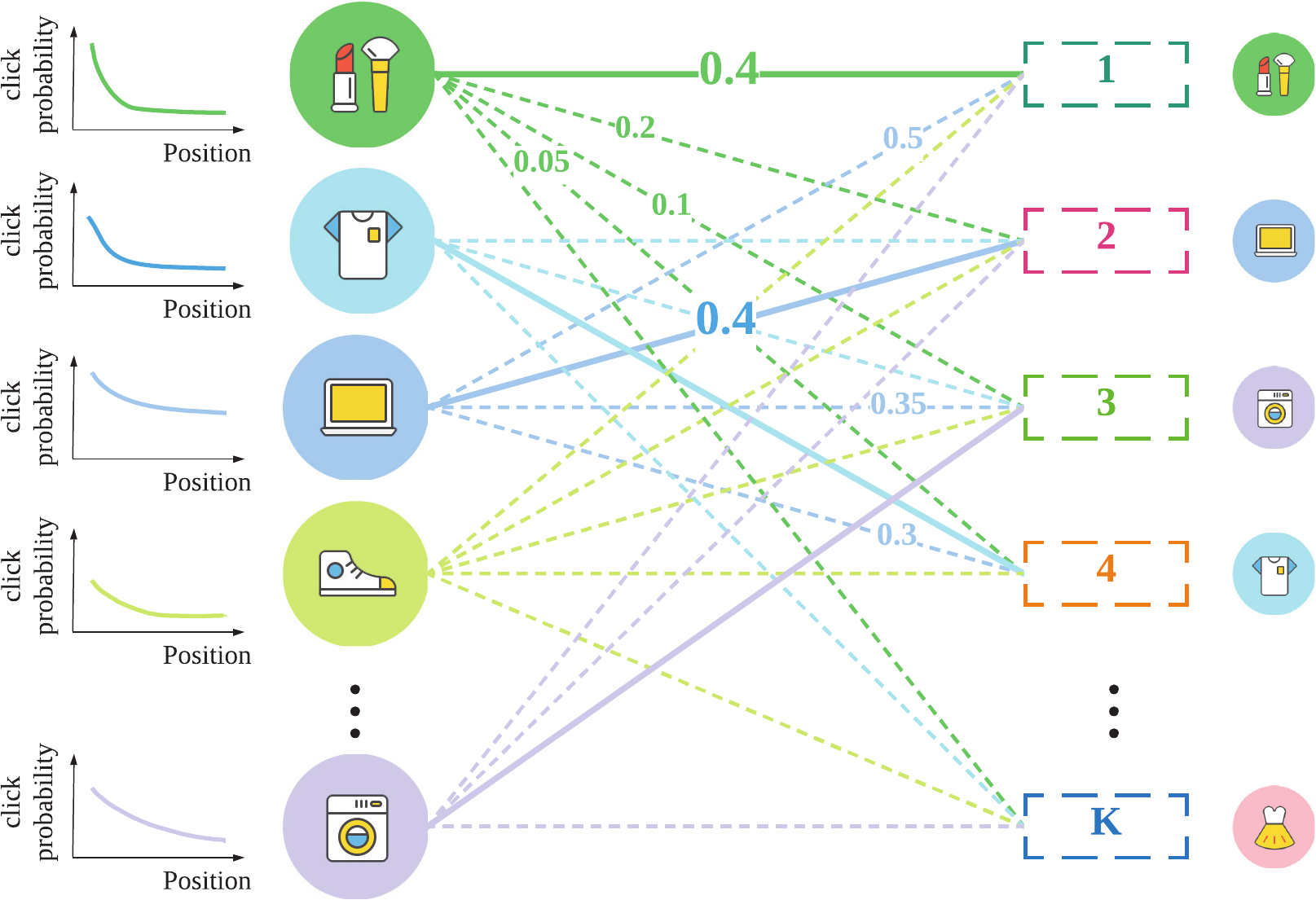}
	\caption{The maximization of the utility can be seen as solving the maximum-weight matching on the item-position bipartite graph, where the edge weight between an item and a position denotes the utility of placing the item at this position, i.e., the product of item's CTR at this position and the utility value of this item.}
	\vspace{-20pt}
	\label{fig:bigraph}
\end{figure}

When a user issues a new request $q$, the system delivers a ranked list $(f_i, b_i)_{i=1}^{n_q}$ of $n_q$ items to the user according to a ranking model
over all the candidate items. 
The feature vector $f_i$ of each item $i$ consists of item features, context features, and user/query features.
The scalar $b_i$ denotes the utility value related to each item $i$, \textit{e.g.}, the watch time of each video in video recommendation, or the bid price of each ad in sponsored search.

The users' click logs are a set $S = \{(f_i, k_i, b_i, c_{i,k_i})_{i=1}^{n_q}\}_{q\in Q}$, where $k_i$ is the position of item $i$, and $c_{i,k_i}$ is the users' implicit feedback on item $i$ when displaying at position $k_i$, \textit{i.e.}, $c_{i,k_i}=1$ for click and $c_{i,k_i}=0$ for non-click. To distinguish between the position of item $i$ in users' click logs and in the current ranking model, in the following parts, we use $k_i^{h}$ to denote the position of item $i$ in click logs and keep $k_i$ as the position of item $i$ in the current ranking model. 

The ultimate goal of this system is to find the best permutation of candidate items for each query $q$ to maximize the utility. The utility is defined as the expected sum of weighted clicks of each item over the whole ranked list, as follows,
\begin{equation}
\begin{aligned}
     U_q = \mathbb{E} \Big[\sum_{i=1}^{n_q} \text{c}_{i, k_{i}} \cdot b_i \Big]
     = \sum_{i=1}^{n_q} P(\text{c}_{i, k_{i}} = 1) \cdot b_i ~,
\end{aligned}
\label{eq:utility}
\end{equation} 
where $P(\text{c}_{i, k_{i}} = 1)$ is the probability of the item $i$ being clicked if displayed at position $k_i$. Maximizing utility $U_q$ is equivalent to solving a maximum weight matching problem on the item-position bipartite graph, where $P(\text{c}_{i, k_{i}} = 1) \cdot b_i$ is the edge weight between the item $i$ and the position $k_i$ in the graph, as shown in Figure~\ref{fig:bigraph}.

\section{Model Framework}
In this section, we present a general ranking framework, i.e., \textit{U-rank}, to maximize the utility $U_q$ in Eq.~\eqref{eq:utility} directly.
Firstly, we derive an unbiased metric of utility from click logs, the expectation of which is the utility $U_q$. Secondly, we design an efficient learning to rank method to optimize this metric, of which the loss function is an upper bound of the utility regret. 

\subsection{Unbiased Estimation of the Utility}
\label{sec:click}
The main difficulty of existing methods of learning to rank via implicit feedback lies in the estimation of the underlying attention bias (or position bias), since we do not observe them directly from the data. With the new learning objective, we do not need to infer relevance or the attention bias explicitly. Instead, we have to deal with another mismatch problem, which is between the 
CTR of the historical presented position and that of the final presented position. For example, if one item is ranked first in the click logs but presented at the 10th position in the final ranking, then its utility is overestimated. To correct this bias, we need an accurate model of user's CTR on different positions. 

The estimation of CTR on different position $P(c_{i,k_{i} }=1)$ refers to one of the most well-studied tasks in recommender systems, i.e., CTR estimation. Deep CTR models~\cite{zhang2016deep,qu2018product} can take position and the rich query-item features as input, to model the complex user interaction in feature space from the click logs.
It is pointed out that position $k_i$ is a very important feature in CTR estimation~\cite{bai2019position,guo2019pal}. However, if we directly used a CTR estimation model as the ranking model, the position feature is vacant at the inference stage. 
Therefore, we design a position-aware deep CTR model as a debiasing module instead of directly using it as a ranking model.
Assume that the probability function of item $i$ displayed at position $k_i$ is a function $g_\theta$ of item feature $f_i$ and position $k_i$, \text{i.e.}, $P(c_{i,k_{i} }=1) = g_\theta(f_i, k_i)$. Then we can estimate the parameter $\theta$ via the standard cross-entropy minimization:  
\begin{equation}
\mathcal{L}_c(\theta) = \sum_{q \in Q} \sum_{i=1}^{n_q} l\left(c_{i, k_i}, g_\theta \left(f_{i}, k_i \right)  \right)~,
\label{eq:ctrloss}
\end{equation}
where $l(p,q) = - p \log q - (1-p)\log(1-q)$ is the cross-entropy loss.

Based on users' click logs and the estimated CTR, we derive an unbiased metric of utility $U_q$ as
\begin{equation}
\begin{aligned}
    U'_q & = \sum_{i=1}^{n_q}  c_{i,k_{i}^{h} } \cdot \frac{P(c_{i,k_{i} }=1)} {P(c_{i,k_{i}^{h} }=1)} \cdot b_i ~. 
\end{aligned}
\label{eq:origin_metric}
\end{equation}

We prove that our derived utility $U'_q$ is unbiased w.r.t. $U_q$ in Eq.~\eqref{eq:utility}, by showing that the expectation of $U'_q$ is equivalent to $U_q$, as
\begin{equation}
\begin{aligned}
    \mathbb{E}[U'_q]
    & = \mathbb{E}\Big[\sum_{i=1}^{n_q}  c_{i,k_{i}^{h} } \cdot \frac{P(c_{i,k_{i} }=1)} {P(c_{i,k_{i}^{h} }=1)} \cdot b_i \Big]\\
    & = \sum_{i=1}^{n_q}  P(c_{i,k_{i}^{h} }=1) \cdot \frac{P(c_{i,k_{i} }=1)} {P(c_{i,k_{i}^{h} }=1)} \cdot b_i \\ 
    & = \sum_{i=1}^{n_q} P(\text{c}_{i, k_{i} } = 1) \cdot b_i = U_q ~.
\end{aligned}
\end{equation}

\subsection{Learning to Optimize the Utility}
One straightforward way to optimize $U_q$ is to perform maximum-weight matching algorithm (e.g., Kuhn-Munkres algorithm~\cite{kuhn1955hungarian,munkres1957algorithms}) on the bipartite graph directly (each query corresponds to one graph), given $g_\theta(f_i, k_i)$. However, the complexity for such a graph matching algorithm to run in the inference stage is $O(N^{3})$ ($N$ denotes the number of candidate items), which is unacceptable in a production system. 
Therefore, in this section, we propose a parameterized scoring function $\Phi(\cdot)$ to approximate the maximum-weight matching procedure on each query, still aiming at maximizing the utility, so that the complexity at the inference stage can be reduced to $O(N)$. 
For each item $i$, the scoring function $\Phi(\cdot)$ gives a ranking score as $s_i=\Phi(f_{i}, b_i)$.
For each query $q$, we compute the score $s_i$ of each item $i$, and the result list is generated by sorting their scores in descending order. 

According to Eq.~\eqref{eq:origin_metric}, we define the utility of displaying item $i$ at position $k_i$ as $u(i, k_i) = c_{i,k_{i}^{h} } \cdot \frac{P(c_{i,k_{i} }=1)} {P(c_{i,k_{i}^{h} }=1)} \cdot b_i $.
With $k_i^*$ being the optimal position assigned to item $i$ by the graph matching algorithm, the regret of the utility is defined as
\begin{equation}
\begin{aligned}
    \mathcal{L}_r(\Phi , q) =  \sum_{i=1}^{n_q} u(i, k_i^*) - \sum_{i=1}^{n_q} u(i, k_i)~.
\end{aligned}
\end{equation}
Minimizing the regret of the utility $\mathcal{L}_r(\Phi,q)$ directly is infeasible as $k_i$'s are discrete values.
Therefore, we adapt the LambdaLoss framework~\cite{lambdaloss} to learn a ranking model towards the optimal ranking by optimizing our proposed loss function (which will be presented in Eq.~(\ref{eq:our-lambda-obj})) with iterative pairwise permutation. Like in LambdaLoss we follow an EM procedure that in E step we obtain the ranked list based on current scoring function $\Phi^{(t)}$ and in M step we re-estimate the scoring function $\Phi^{(t+1)}$ to minimize our loss function. The learning procedure of \textit{U-rank} is as follows. 

We first initialize the scoring function with random initialization of $\Phi^{(0)}$. 
Inspired by the re-weighting technique used in LambdaRank~\cite{burges2007learning}, we compute the difference between the unbiased utility $\Delta Util(i, j)$ when the positions of two items $i$ and $j$ are swapped, as
\begin{equation}
\begin{aligned}
    \Delta Util(i, j) = u(i, k_j) + u(j, k_i) - u(i, k_i) - u(j, k_j) ~.
\label{eq:delta_rev}
\end{aligned}
\end{equation}
Then this difference value is used as the weight in the pairwise loss for each pair of items. 
Following~\cite{burges2005learning,burges2007learning}, we design our loss function in the form of logistic loss, as 
\begin{equation}
\begin{aligned} \label{eq:our-lambda-obj}
    \mathcal{L}_r^{\prime}(\Phi, q) & =  \sum_{i=1}^{n_q} \sum_{j : k_{j} <k_{i}}  \Delta Util(i, j) \log \left(1+e^{-\sigma\left(s_{i}-s_{j}\right)}\right), \\
\end{aligned}
\end{equation}
where $k_i$ and $k_j$ denote the position assigned to item $i$ and $j$ by ranking model at the last step (by the scoring function $\Phi^{(t)}$). This loss is minimized, so that we get a new scoring function $\Phi^{(t+1)}$. 
Then  the process is repeated until convergence. 

Notice that in a standard LambdaLoss framework, the LambdaLoss is defined as
\begin{equation}\label{eq:lambdaloss-obj}
    \mathcal{L_\lambda}(\Phi, q) =  \sum_{i=1}^{n_q} \sum_{j : y_{i} > y_{j}}  |\Delta NDCG(i, j)| \log \left(1+e^{-\sigma\left(s_{i}-s_{j}\right)}\right) ~.
\end{equation}

Note that the differences between our objective (\ref{eq:our-lambda-obj}) and the LambdaLoss objective (\ref{eq:lambdaloss-obj}) lie in (i) the subscript of the summation symbol and (ii) the absolute value symbol of the difference term $\Delta$. In LambdaLoss framework, the pairwise label of each item pair $(i,j)$ is determined. The optimal ranking order is known to us by ranking all the items according to relevance label or click label (denoted by $y_i$ for item $i$), in descending order. 
However, in our framework, we cannot obtain the explicit label $y_i$ for item $i$ . 
An item is treated as the positive item if it is placed at a lower position by scoring function $\Phi^{(t)}$ and the swap of the item pair brings utility gain, and vise versa. We cannot access the optimal ranking order in each query beforehand, where the optimal order is achieved through iterative pairwise permutation. 

\vspace{-5pt}
\subsection{Theoretical Analysis}
In this section, we theoretically prove that our proposed training objective $\mathcal{L}_r^{\prime}(\Phi, q)$ is an upper bound of the utility regret $\mathcal{L}_r(\Phi , q)$. To make the proof easier to understand, we construct a function:  
\begin{equation}
\mathcal{L}_r^{\prime \prime}(\Phi, q) = \sum_{i=1}^{n_q} \sum_{j:s_i<s_j} |u(i, k_j)- u(i, k_i)|.
\end{equation}
We start with several lemmas which will be used in our proof.  

\begin{lemma}
Given an indicator function $f(x)=\mathbb{I}(x \le 0)$ and a function $g(x) = \log (1+e^{-\sigma x})$ where $\sigma$ is a constant in $\mathbb{R}$, it holds that $f(x) \le g(x)$ for all $x \in \mathbb{R}$. 
\label{lem:1}
\end{lemma}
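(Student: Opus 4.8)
The plan is to exploit that $f$ is a step function whose only jump is at the origin, while $g$ is smooth and monotone, so the inequality reduces to checking a single worst-case point. Throughout I read $\log$ as the base-$2$ logarithm (the NDCG/LambdaLoss convention this loss inherits) and take $\sigma>0$; both conventions turn out to be exactly what the bound needs, as I explain at the end. I would then simply split on the sign of $x$.

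For $x>0$ the left-hand side is free: $f(x)=\mathbb{I}(x\le 0)=0$, while $1+e^{-\sigma x}>1$ forces $g(x)=\log(1+e^{-\sigma x})>\log 1=0=f(x)$. This half uses nothing about the base of the logarithm or the sign of $\sigma$, so it is immediate.

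For $x\le 0$ we have $f(x)=1$, so the real content is the claim $g(x)\ge 1$. I would rewrite this as an elementary chain of equivalences, $\log_2(1+e^{-\sigma x})\ge 1 \iff 1+e^{-\sigma x}\ge 2 \iff e^{-\sigma x}\ge 1 \iff -\sigma x\ge 0$, where the last inequality holds because $x\le 0$ and $\sigma>0$. Equivalently, one can observe that $g$ is strictly decreasing in $x$ when $\sigma>0$, so on the ray $(-\infty,0]$ its minimum is attained at $x=0$, where $g(0)=\log_2 2=1=f(0)$; monotonicity then gives $g(x)\ge g(0)=1$ for every $x\le 0$. Combining the two cases yields $f(x)\le g(x)$ on all of $\mathbb{R}$.

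The one place to be careful — and the only genuine obstacle in an otherwise two-line argument — is the boundary point $x=0$, where the bound is tight (equality) rather than slack, and it is this tightness that pins down the two hypotheses. If $\log$ were the natural logarithm then $g(0)=\ln 2<1$, and since $\lim_{x\to 0^-}g(x)=\ln 2<1$ the claim would fail in a left neighborhood of $0$ for \emph{every} value of $\sigma$, so the base must be $2$. Likewise $\sigma<0$ would make $g$ increasing with $g(x)\to 0$ as $x\to-\infty$, again breaking $g\ge 1$ on $x\le 0$, so we really do need $\sigma>0$ (the regime in which $\sigma$ serves as the loss temperature). I would therefore make these two assumptions explicit in the statement, after which the case analysis above completes the proof.
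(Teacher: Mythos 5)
Your proof is correct, and it actually does more work than the paper, which states Lemma~\ref{lem:1} bare (no proof is given; the lemma is simply invoked in the chain of inequalities proving Theorem~\ref{theorem:1}). Your case split is the natural argument: for $x>0$ one has $f(x)=0<\log_2(1+e^{-\sigma x})$ trivially, and for $x\le 0$ the equivalence $\log_2(1+e^{-\sigma x})\ge 1 \iff e^{-\sigma x}\ge 1 \iff -\sigma x\ge 0$ settles the claim. More valuably, your diagnosis of the hypotheses is exactly right and exposes a genuine imprecision in the statement as printed: with the natural logarithm, $g(0)=\ln 2<1=f(0)$, so the inequality fails at $x=0$ (and on a left neighborhood of it) for \emph{every} $\sigma$; and with $\sigma<0$ it fails for all $x<0$ even in base $2$. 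So the lemma is true only under the base-$2$ convention inherited from the RankNet/LambdaLoss lineage together with a sign condition on $\sigma$, neither of which the paper makes explicit --- ``$\sigma$ is a constant in $\mathbb{R}$'' is strictly too weak, and your insistence on stating these assumptions is an improvement, not pedantry. One microscopic quibble: $\sigma=0$ also satisfies the inequality, since then $g\equiv\log_2 2=1\ge f$ everywhere, so the sharp hypothesis is $\sigma\ge 0$ rather than $\sigma>0$; of course $\sigma>0$ is the only regime in which the loss in Eq.~(\ref{eq:our-lambda-obj}) is non-degenerate, so your assumption is the right one in context.
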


\begin{lemma}
Given an indicator function $f(x)=\mathbb{I}(x \ge 0)$ and a function $g(x) = \max\{\log(1+e^{\sigma C}), 2\} - \log (1+e^{-\sigma x})$ where $\sigma$ is a constant in $\mathbb{R}$, it holds that $f(x) \le g(x)$ for $x\in [-C, C]$.
\label{lem:2}
\end{lemma}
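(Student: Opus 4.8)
The plan is to prove the pointwise bound by separating the two values the indicator can take on the interval $[-C,C]$ and exploiting monotonicity of the logistic term. Throughout I take $\sigma>0$ (the regime used in the training loss), and I abbreviate $\phi(x)=\log(1+e^{-\sigma x})$ together with $M=\max\{\log(1+e^{\sigma C}),2\}$, so that the claim $f(x)\le g(x)$ is exactly $\phi(x)\le M-f(x)$ on $[-C,C]$.

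First I would record two elementary facts about $\phi$. Since $\phi'(x)=-\sigma/(1+e^{\sigma x})<0$, the map $\phi$ is strictly decreasing on all of $\mathbb{R}$; consequently, on the interval $[-C,C]$ it attains its maximum at the left endpoint, where $\phi(-C)=\log(1+e^{\sigma C})$, and it satisfies $\phi(0)=\log 2\le 1$. From the definition of the maximum I also have $M\ge\log(1+e^{\sigma C})$ and $M\ge 2$, hence $M-1\ge 1$.

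Next I would split on $\mathbb{I}(x\ge 0)$. For $x\in[-C,0)$ we have $f(x)=0$, so it suffices to show $\phi(x)\le M$; this is immediate, since monotonicity gives $\phi(x)\le\phi(-C)=\log(1+e^{\sigma C})\le M$. For $x\in[0,C]$ we have $f(x)=1$, so I must show $\phi(x)\le M-1$; again monotonicity gives $\phi(x)\le\phi(0)=\log 2\le 1\le M-1$, where the final step uses $M\ge 2$. Combining the two regions yields $\phi(x)\le M-f(x)$ for every $x\in[-C,C]$, which is the assertion.

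The only delicate point, and the reason the statement reads the way it does, is the interplay of the constant $2$ with the restriction to a bounded interval. On the negative side the term $\log(1+e^{\sigma C})$ already dominates $\phi$ by itself, so the ``$2$'' is inert there; it is inserted precisely to handle the positive side, where after subtracting the indicator value $1$ we are left with a residual $\phi(0)=\log 2$ that must still be absorbed, and $M\ge 2$ guarantees $M-1\ge 1\ge\log 2$ comfortably. The restriction to $[-C,C]$, rather than all of $\mathbb{R}$ as in Lemma~\ref{lem:1}, is unavoidable, because $\phi$ blows up as $x\to-\infty$ and no finite $M$ can dominate it globally; $\log(1+e^{\sigma C})$ is exactly the worst-case value of $\phi$ over the admissible range. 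I expect the main obstacle to be nothing deeper than keeping this endpoint-and-constant bookkeeping correct, as the inequality itself follows at once from monotonicity once the two regions are separated.
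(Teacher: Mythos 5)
Your proof is correct. The paper states Lemma~\ref{lem:2} without any proof (it is invoked only inside the proof of Theorem~\ref{theorem:1}), so there is no authorial argument to compare against; your case split on the indicator combined with monotonicity of $\phi(x)=\log(1+e^{-\sigma x})$ is exactly the natural argument the authors presumably had in mind, and your bookkeeping ($\phi(-C)=\log(1+e^{\sigma C})\le M$ on the negative side, $\phi(0)=\log 2\le 1\le M-1$ on the nonnegative side) is accurate. One point you flag in passing deserves emphasis: your standing assumption $\sigma>0$ is not merely convenient but necessary, since the lemma as literally stated (``$\sigma$ a constant in $\mathbb{R}$'') is false for $\sigma<0$ --- taking $x=C>0$ gives $M=\max\{\log(1+e^{\sigma C}),2\}=2$ while $\phi(C)=\log(1+e^{|\sigma|C})>1$ once $|\sigma|C$ is large, so $g(C)<1=f(C)$. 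The case $\sigma=0$ does hold trivially. Since the paper only ever uses the lemma with the logistic-loss scale parameter $\sigma>0$ (applied on $[-2C,2C]$ with constant $C_1=\max\{\log(1+e^{2\sigma C}),2\}$ in Theorem~\ref{theorem:1}), your reading repairs a small imprecision in the statement rather than introducing a gap; your observation that restriction to a bounded interval is unavoidable because $\phi(x)\to\infty$ as $x\to-\infty$ is likewise correct.
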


\begin{lemma}
Given a sum function $ f(x) = \sum_i x_i $ and a max function $g(x)= \max_i x_i$, it holds that $f(x) \ge g(x)$ for $x_i \ge 0, \forall i$.
\label{lem:3}
\end{lemma}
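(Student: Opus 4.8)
The plan is to establish $\sum_i x_i \ge \max_i x_i$ by isolating the maximizing term and bounding everything else below by zero. First I would let $m$ denote an index at which the maximum is attained, so that $g(x) = x_m = \max_i x_i$; such an index exists because the index set here is finite (each query ranks a finite number of items $n_q$, so these are finite sums and the maximum is a genuine maximum rather than a supremum).

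Next I would split the total sum around this distinguished index, writing $f(x) = \sum_i x_i = x_m + \sum_{j \ne m} x_j$. The hypothesis $x_i \ge 0$ for all $i$ guarantees that every term in the residual sum $\sum_{j\ne m} x_j$ is non-negative, hence the residual sum is itself non-negative. Adding a non-negative quantity to $x_m$ cannot decrease it, so $f(x) = x_m + \sum_{j\ne m} x_j \ge x_m = g(x)$, which is precisely the stated inequality.

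There is no genuine obstacle in this argument, but the point worth flagging is where the non-negativity assumption is actually consumed: it is used exactly once, to discard the off-maximum terms, and the conclusion is false without it (for instance, a single negative $x_j$ could pull the sum below the maximum). The only other thing to verify is the well-definedness of $\max_i x_i$, which is immediate from finiteness of the index set and requires that the set be non-empty.
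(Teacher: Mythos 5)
Your proof is correct: isolating a maximizing index and discarding the remaining non-negative terms is the standard argument, and the paper itself states Lemma~\ref{lem:3} without proof, treating exactly this reasoning as immediate. Your added remarks on where non-negativity is consumed and on finiteness/non-emptiness of the index set are accurate but not substantively different from what the paper implicitly assumes.
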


Now we are ready to derive the main theoretical result. 
\vspace{-3pt}
\begin{theorem}

Assume $u(i, k_i)$ is a monotonic decreasing function w.r.t $k_i$ and the ranking score $s_i$ is bounded in the range of [-C,C]. Let $C_1 = max\{log(1 + exp(2\sigma C)),2\}$ and $C_2= C1 \cdot \sum_{j=1}^{n_q} \sum_{i:k_i>k_j} (u(j, k_j) - u(j, k_i)) $. Then we have 
$\mathcal{L}_r^{\prime \prime}(\Phi, q) \le \mathcal{L}_r^{\prime}(\Phi, q) + C_2 $.
\label{theorem:1}
\end{theorem}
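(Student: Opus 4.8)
The plan is to prove the inequality pairwise after first aligning the index sets of $\mathcal{L}_r''$ and $\mathcal{L}_r'$. The key preliminary observation is that, since the ranking list is generated by sorting the scores $s_i$ in descending order, the ordering condition $k_j < k_i$ (item $j$ sits above item $i$) is equivalent to $s_i < s_j$. Consequently the double sum in $\mathcal{L}_r'$ (indexed by $k_j < k_i$) and the double sum in $\mathcal{L}_r''$ (indexed by $s_i < s_j$) range over \emph{exactly} the same set of ordered pairs $P = \{(i,j) : k_j < k_i\}$, and on every such pair the score gap $x_{ij} := s_i - s_j$ satisfies $x_{ij} \le 0$ and, by the boundedness assumption, $x_{ij} \in [-2C, 2C]$.

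Next I would remove the absolute value and set up a convenient decomposition. On a pair $(i,j)\in P$ the monotonicity of $u(i,\cdot)$ gives $u(i,k_j) \ge u(i,k_i)$, so $|u(i,k_j)-u(i,k_i)| = A_{ij}$ with $A_{ij} := u(i,k_j)-u(i,k_i) \ge 0$; writing also $B_{ij} := u(j,k_j)-u(j,k_i) \ge 0$, the definition in Eq.~\eqref{eq:delta_rev} reads $\Delta Util(i,j) = A_{ij} - B_{ij}$. Since $C_2 = C_1 \sum_{(i,j)\in P} B_{ij}$, the claim reduces, term by term, to
\[ A_{ij} \le (A_{ij}-B_{ij})\log\!\big(1+e^{-\sigma x_{ij}}\big) + C_1 B_{ij}, \]
which rearranges to $A_{ij}\big(\log(1+e^{-\sigma x_{ij}})-1\big) + B_{ij}\big(C_1 - \log(1+e^{-\sigma x_{ij}})\big) \ge 0$.

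The crux is then to certify that both bracketed factors are nonnegative on $P$, and this is precisely what the two lemmas deliver. Because $x_{ij} \le 0$, Lemma~\ref{lem:1} gives $\log(1+e^{-\sigma x_{ij}}) \ge \mathbb{I}(x_{ij} \le 0) = 1$, so the first factor is $\ge 0$; and because $x_{ij} \in [-2C, 2C]$, applying Lemma~\ref{lem:2} with its constant evaluated on this range (i.e. replacing $C$ by $2C$, which yields exactly $C_1 = \max\{\log(1+\exp(2\sigma C)),2\}$) gives $C_1 - \log(1+e^{-\sigma x_{ij}}) \ge \mathbb{I}(x_{ij} \ge 0) \ge 0$. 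With $A_{ij}, B_{ij} \ge 0$ from monotonicity, each per-pair term is nonnegative, and summing over $P$ delivers $\mathcal{L}_r''(\Phi,q) \le \mathcal{L}_r'(\Phi,q) + C_2$.

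I expect the main obstacle to be bookkeeping rather than any deep estimate: one must (i) recognize that descending-order sorting makes the two index sets coincide (modulo score ties), so that the pairs line up and the per-pair comparison is even meaningful, and (ii) track the range of $x_{ij}$ carefully so that the constant supplied by Lemma~\ref{lem:2} is $2\sigma C$ rather than $\sigma C$ — this is exactly what forces the factor $2$ inside $C_1$. Note that Lemma~\ref{lem:3} is not needed for this particular bound; it is the tool for the separate step relating $\mathcal{L}_r''$ to the genuine utility regret $\mathcal{L}_r$.
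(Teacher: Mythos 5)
Your per-pair estimates are exactly the paper's: Lemma~\ref{lem:1} to get $\log(1+e^{-\sigma x})\ge 1$ when $x\le 0$, Lemma~\ref{lem:2} applied on $[-2C,2C]$ to get $C_1\ge\log(1+e^{-\sigma x})$ (you correctly track the factor $2$ in $C_1$), the decomposition $\Delta Util(i,j)=A_{ij}-B_{ij}$ with both terms nonnegative by monotonicity, and the identification $C_2=C_1\sum B_{ij}$; your remark that Lemma~\ref{lem:3} belongs to Theorem~\ref{theorem:2} is also right. But your ``key preliminary observation'' is a genuine gap. In the EM procedure, the positions $k_i$ appearing in $u(i,k_i)$, in the summation condition $k_j<k_i$ of Eq.~\eqref{eq:our-lambda-obj}, and in $C_2$ are frozen from the \emph{previous} scoring function $\Phi^{(t)}$ --- the paper says so explicitly right after Eq.~\eqref{eq:our-lambda-obj} --- while the scores $s_i$ belong to the $\Phi$ currently being minimized. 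Hence $k_j<k_i$ is equivalent to $s_i<s_j$ only at the single point $\Phi=\Phi^{(t)}$; along the M-step optimization path the two orders disagree, and the theorem is only useful if it holds for arbitrary bounded scores, since it is invoked to argue that minimizing $\mathcal{L}_r^{\prime}$ drives down an upper bound of $\mathcal{L}_r^{\prime\prime}$ as $\Phi$ moves. Your alignment proves the inequality only at one point, where it is a comparison of fixed numbers and cannot support the surrogate-minimization claim.

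Concretely, the step that fails is the assertion that the two double sums range over the same pairs. When the orders disagree, a pair with $k_j<k_i$ but $s_i>s_j$ contributes nothing through the orientation you keep, yet its reversal $(j,i)$ satisfies $s_j<s_i$ and contributes $|u(j,k_i)-u(j,k_j)|=B_{ij}$ to $\mathcal{L}_r^{\prime\prime}$ --- these ``crossed'' terms are exactly what you drop. The paper's proof never discharges the indicators: it writes $\mathcal{L}_r^{\prime\prime}=\sum_{(i,j):k_j<k_i}\left[A_{ij}\,\mathbb{I}(s_i<s_j)+B_{ij}\,\mathbb{I}(s_j<s_i)\right]$ and bounds $\mathbb{I}(s_i<s_j)\le\log\left(1+e^{-\sigma(s_i-s_j)}\right)$ by Lemma~\ref{lem:1} and $\mathbb{I}(s_j<s_i)\le C_1-\log\left(1+e^{-\sigma(s_i-s_j)}\right)$ by Lemma~\ref{lem:2}; the latter is where Lemma~\ref{lem:2} genuinely earns its keep, whereas in your version it degenerates to bounding terms that are identically zero under your consistency assumption (indeed, under that assumption the paper's second group vanishes entirely, which should have been a warning sign). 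Your per-pair algebra can be salvaged verbatim once you restore the two indicator groups; as written, the proof establishes a strictly weaker statement than Theorem~\ref{theorem:1}.
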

\begin{proof}
{\small
\begin{align}
\mathcal{L}_r^{\prime \prime} (\Phi, q)  =& \sum_{i=1}^{n_q} \sum_{j=1}^{n_q} |u(i, k_j)- u(i, k_i)| \mathbb{I} (s_i<s_j) \nonumber\\
 =& \sum_{i=1}^{n_q} \sum_{j:k_j<k_i} (u(i, k_j)- u(i, k_i)) \mathbb{I} (s_i<s_j)  + \sum_{j=1}^{n_q} \sum_{i:k_i>k_j} (u(j, k_j)\nonumber\\
 & - u(j, k_i)) \mathbb{I} (s_j<s_i)\nonumber\\
 \le & \sum_{i=1}^{n_q} \sum_{j:k_j<k_i} (u(i, k_j)- u(i, k_i)) \log \left(1+e^{-\sigma\left(s_{i}-s_{j}\right)}\right) \nonumber \\
 &  - \sum_{i=1}^{n_q} \sum_{j:k_j<k_i} (u(j, k_j)  - u(j, k_i)) [\log \left(1+e^{-\sigma\left(s_{i}-s_{j}\right)}\right) + C_1] \nonumber\\
 =& \sum_{i=1}^{n_q} \sum_{j:k_j<k_i} \Delta Util(i,j) \log \left(1+e^{-\sigma\left(s_{i}-s_{j}\right)}\right) + C_2 \\
 =&\mathcal{L}_r^{\prime}(\Phi, q) + C_2 ~\label{eq:derive}
\end{align}
}
where the inequality holds due to Lemma \ref{lem:1} and Lemma \ref{lem:2}.
\end{proof}

Theorem \ref{theorem:1} states that $\mathcal{L}_r^{\prime \prime} (\Phi, q)$ is upper bounded by our objective $\mathcal{L}_r^{\prime} (\Phi, q)$ plus $C_2$. $C_2$ is a constant since in the M step $C_2$ only depends on the current scoring function $\Phi^{(t)}$. Notice that the assumptions in the theorem are not restrictive in practice. As illustrated in Figure \ref{fig:data_analysis}, the real utility basically satisfies the monotonic decreasing assumption. Moreover, the ranking score is often clipped in implementation to avoid explosion in exponential function.

\begin{theorem}
Assume the utility $u(i, k_i)$ is a monotonic decreasing function w.r.t $k_i$. Then $\mathcal{L}_r (\Phi, q)$ is upper bounded by $\mathcal{L}_r^{\prime \prime} (\Phi, q)$.
\label{theorem:2}
\end{theorem}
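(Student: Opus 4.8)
The plan is to reduce the statement to a single per-item inequality and then sum over items. Write $\delta_i = u(i, k_i^*) - u(i, k_i)$ for the regret contributed by item $i$, so that $\mathcal{L}_r(\Phi, q) = \sum_{i=1}^{n_q} \delta_i$. On the other side, I first claim that the absolute values in $\mathcal{L}_r^{\prime\prime}(\Phi, q)$ may be dropped, so that $\mathcal{L}_r^{\prime\prime}(\Phi, q) = \sum_{i=1}^{n_q} \sum_{j: s_i < s_j}(u(i, k_j) - u(i, k_i))$. Given this, it suffices to establish, for every item $i$, the local bound $\delta_i \le \sum_{j: s_i < s_j}(u(i, k_j) - u(i, k_i))$ and then add these $n_q$ inequalities.

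First I would pin down the geometry of the current ranking. Since $\Phi$ produces the list by sorting scores in descending order, the items $j$ with $s_j > s_i$ are exactly those ranked above item $i$, so the set of their positions $\{k_j : s_j > s_i\}$ is precisely the contiguous block $\{1, 2, \ldots, k_i - 1\}$. Consequently the right-hand side of the local bound equals $\sum_{p=1}^{k_i-1}(u(i, p) - u(i, k_i))$. Because $u(i, \cdot)$ is monotonically decreasing and each index $p$ in this range satisfies $p < k_i$, every summand $u(i, p) - u(i, k_i)$ is non-negative; this simultaneously justifies discarding the absolute values in $\mathcal{L}_r^{\prime\prime}(\Phi, q)$.

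The core step is then a short case analysis on the optimal position $k_i^*$. If $k_i^* \ge k_i$, monotonicity gives $u(i, k_i^*) \le u(i, k_i)$, hence $\delta_i \le 0$, while the right-hand side is a sum of non-negative terms and is therefore $\ge 0$; the bound holds trivially. If instead $k_i^* < k_i$, then $k_i^* \in \{1, \ldots, k_i - 1\}$, so $\delta_i = u(i, k_i^*) - u(i, k_i)$ is exactly the summand indexed by $p = k_i^*$. Invoking Lemma~\ref{lem:3} (a sum of non-negative numbers dominates any single one of them), this one term is bounded above by the whole sum, which again yields the local bound. Summing over all $i$ produces $\mathcal{L}_r(\Phi, q) \le \mathcal{L}_r^{\prime\prime}(\Phi, q)$.

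The main obstacle, and essentially the only place requiring care, is the bookkeeping of the second paragraph: one must argue that the positions occupied by the higher-scored items form exactly the block $\{1, \ldots, k_i - 1\}$, so that the per-item sum ranges precisely over the positions strictly above $k_i$ and the term $p = k_i^*$ is guaranteed to appear whenever $k_i^* < k_i$. Everything else is immediate from monotonicity of $u(i, \cdot)$ and Lemma~\ref{lem:3}. Note in particular that no structural property of the maximum-weight matching is used beyond the fact that $\{k_i^*\}$ is a permutation of the positions, so each $k_i^*$ is a well-defined single element of $\{1, \ldots, n_q\}$; the argument therefore bounds the gap to \emph{any} target permutation, with the optimal one being a special case.
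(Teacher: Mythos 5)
Your proof is correct and follows essentially the same route as the paper's: the same identification of the positions of higher-scored items with the block $\{1,\ldots,k_i-1\}$, the same use of monotonicity to drop the absolute values, and the same appeal to Lemma~\ref{lem:3} to lower-bound each per-item sum by a single term. The only cosmetic difference is the final step, where the paper uniformly selects the $p=1$ term and then uses $u(i,1)\ge u(i,k_i^*)$, while you select the $p=k_i^*$ term directly with a case split on whether $k_i^*<k_i$; both are immediate consequences of monotonicity.
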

\vspace{-10pt}
\begin{proof}
{\small
\begin{align}
\vspace{-10pt}
\mathcal{L}_r^{\prime \prime} (\Phi, q) &= \sum_{i=1}^{n_q} \sum_{j:s_i<s_j} |u(i, k_j)- u(i, k_i)|  
= \sum_{i=1}^{n_q} \sum_{k=1}^{k_i-1} |u(i, k)- u(i, k_i)| \nonumber\\
& = \sum_{i=1}^{n_q} \sum_{k=1}^{k_i-1} (u(i, k)- u(i, k_i)) 
\ge \sum_{i=1}^{n_q} u(i, 1) - \sum_{i=1}^{n_q} u(i, k_i)\nonumber\\
&\ge \sum_{i=1}^{n_q} u(i, k_i^*) - \sum_{i=1}^{n_q} u(i, k_i)= \mathcal{L}_r (\Phi, q)
\end{align}
}
where the first inequality holds due to Lemma \ref{lem:3}.
\end{proof}

\begin{table*}[htbp]
	\centering
	\small
	\caption{Comparison of different unbiased learning to rank models on three benchmark datasets. }
	\vspace{-9pt}
	\scalebox{0.9}{
	\begin{tabular}{ c | c || c | c | c | c | c | c | c | c |c|c|c|c } 
		\hline
		\multicolumn{2}{c||}{\multirow{2}{*}{Ranking model}} & \multicolumn{4}{c|}{Yahoo! LETOR set 1} & \multicolumn{4}{c|}{MSLR-WEB10K} & \multicolumn{4}{c}{Istella-S LETOR} \\ \cline{3-14} 
		\multicolumn{2}{c||}{} & MAP & nDCG  & \# Click &CTR & MAP & nDCG& \# Click & CTR & MAP & nDCG & \# Click & CTR \\\cline{1-14}
		\multirow{4}{*}{SVMRank}
		& \multicolumn{1}{c||}{None} & 0.702  & 0.845 & 0.599 & 0.0641 & 0.498 & 0.735 & 0.834 & 0.0827  & 0.773 &  0.808 & 0.931 & 0.0939\\\cline{2-2}
		& \multicolumn{1}{c||}{Randomization} & 0.639  & 0.787 & 0.544 & 0.0574 & 0.433 & 0.686 & 0.820 & 0.0799 & 0.742 & 0.787 & 0.909 & 0.0910 \\\cline{2-2}
 		& \multicolumn{1}{c||}{CPBM} & 0.701  & 0.843 & 0.594 & 0.0637  & 0.477 & 0.721 & 0.751 & 0.0746 & 0.752 & 0.793 & 0.923 & 0.0919 \\ \cline{2-2}
 		& \multicolumn{1}{c||}{\textit{Groundtruth}} & \textit{0.718} & \textit{0.859} & \textit{0.612} & \textit{0.0656} & \textit{0.515} & \textit{0.748}  & \textit{0.872} & \textit{0.0870} & \textit{0.775} & \textit{0.816} & \textit{0.958} & \textit{0.0952}\\  \hline
 		\multirow{4}{*}{LambdaRank}
 		& \multicolumn{1}{c||}{None} & 0.700  & 0.847 & 0.606 & 0.0641 & 0.498 & 0.736 & 0.834 & 0.0828& 0.776  & 0.810 & 0.945 & 0.0941\\ \cline{2-2}
 		& \multicolumn{1}{c||}{Randomization} & 0.680  & 0.828 & 0.582 & 0.0621 & 0.451 & 0.700 & 0.813 & 0.0808  & 0.748 & 0.793 & 0.924 & 0.0923 \\ \cline{2-2}
 		& \multicolumn{1}{c||}{CPBM} & 0.718  & 0.857 & 0.613 & 0.0651 & \textbf{0.514} &\textbf{0.744} & 0.836 & 0.0836 & 0.779 & 0.813 & 0.941 & 0.0941 \\ \cline{2-2}
 		& \multicolumn{1}{c||}{\textit{Groundtruth}} & \textit{0.719} & \textit{0.859} & \textit{0.618} & \textit{0.0657}  & \textit{0.521} & \textit{0.748} & \textit{0.882} & \textit{0.0883} & \textit{0.781} & \textit{0.815} & \textit{0.948} & \textit{0.0948} \\ \hline
 		
 		\multirow{3}{*}{DNN}
 		& \multicolumn{1}{c||}{DLA} & 0.639 & 0.782 & 0.553 & 0.0589  & 0.430 & 0.682 & 0.830 & 0.0839 & 0.676 & 0.703 & 0.828 & 0.0824\\ \cline{2-2}
		& \multicolumn{1}{c||}{CTR-1} & 0.647 & 0.792 & 0.551 & 0.0577  & 0.477 & 0.722 & 0.829 & 0.0814 & 0.733  & 0.771 & 0.894 & 0.0895  \\ \hline
	    \multicolumn{2}{c||}{U-rank} & \textbf{0.719} & \textbf{0.861*}& \textbf{0.618*} & \textbf{0.0659*} & 0.492 & 0.725 & \textbf{0.903*} & \textbf{0.0915*} & \textbf{0.783*} & \textbf{0.816*} & \textbf{0.959*} & \textbf{0.0953*}\\ \hline
		\multicolumn{2}{c||}{\textit{KM (oracle model)}} & \textit{0.935} & \textit{0.987} & \textit{0.684} & \textit{0.0737}  & \textit{ 0.710} & \textit{0.723} & \textit{0.976} & \textit{0.0969} & \textit{0.993} & \textit{0.995} & \textit{1.132} & \textit{0.1126}\\ \hline
	\end{tabular}
	}
    \footnotesize \flushleft $*$ denotes statistically significant improvement (measured by t-test with p-value$<$0.05) over all baselines. Note: baselines in italic is not included.
    \vspace{-10pt}
	\label{tab:simulation}
\end{table*}

\begin{theorem}
Under the assumption of Theorem \ref{theorem:1} and Theorem \ref{theorem:2}, we have that $\mathcal{L}_r(\Phi, q) \le \mathcal{L}_r^{\prime}(\Phi, q) + C_2 $.
\label{theorem:3}
\end{theorem}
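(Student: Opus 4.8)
The plan is to treat the auxiliary quantity $\mathcal{L}_r^{\prime\prime}(\Phi, q)$ as an intermediate bridge between the true utility regret $\mathcal{L}_r(\Phi, q)$ and the trainable surrogate $\mathcal{L}_r^{\prime}(\Phi, q)$, and then simply chain the two inequalities already established in Theorems~\ref{theorem:1} and~\ref{theorem:2}. Concretely, Theorem~\ref{theorem:2} supplies the lower-side bound $\mathcal{L}_r(\Phi, q) \le \mathcal{L}_r^{\prime\prime}(\Phi, q)$ under the monotonic decreasing assumption on $u(\cdot, \cdot)$, while Theorem~\ref{theorem:1} supplies the upper-side bound $\mathcal{L}_r^{\prime\prime}(\Phi, q) \le \mathcal{L}_r^{\prime}(\Phi, q) + C_2$ under that same monotonicity assumption together with the boundedness $s_i \in [-C, C]$ of the ranking scores. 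Since Theorem~\ref{theorem:3} hypothesizes that the assumptions of both theorems hold simultaneously, both inequalities are valid for the same $\Phi$ and $q$, and crucially the same function $\mathcal{L}_r^{\prime\prime}$ appears on each side.

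The single key step is then an application of transitivity. Combining the two displayed bounds gives
\begin{equation}
\mathcal{L}_r(\Phi, q) \le \mathcal{L}_r^{\prime\prime}(\Phi, q) \le \mathcal{L}_r^{\prime}(\Phi, q) + C_2,
\end{equation}
and dropping the middle term yields the claim. No new lemmas or estimates are needed, since all of the analytic work has already been carried out in the proofs of the two component theorems.

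I expect no genuine technical obstacle; the only point warranting care is confirming that the two theorems are compatible and that $\mathcal{L}_r^{\prime\prime}$ is literally the same construction in both invocations. Because $\mathcal{L}_r^{\prime\prime}$ is defined once and for all before the lemmas, and because the score-boundedness hypothesis of Theorem~\ref{theorem:1} is orthogonal to and consistent with the utility-monotonicity hypothesis shared by both results, the composition is legitimate and $C_2$ is the same constant throughout. The conceptual payoff I would stress in the write-up is that this chain certifies the trainable objective $\mathcal{L}_r^{\prime}(\Phi, q)$, up to the additive constant $C_2$ that is fixed within each M step, as a genuine upper bound on the utility regret, so that minimizing $\mathcal{L}_r^{\prime}(\Phi, q)$ provably drives down the quantity we ultimately care about.
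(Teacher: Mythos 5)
Your proposal is correct and matches the paper's own argument exactly: the paper also derives Theorem~\ref{theorem:3} by chaining $\mathcal{L}_r(\Phi, q) \le \mathcal{L}_r^{\prime\prime}(\Phi, q)$ from Theorem~\ref{theorem:2} with $\mathcal{L}_r^{\prime\prime}(\Phi, q) \le \mathcal{L}_r^{\prime}(\Phi, q) + C_2$ from Theorem~\ref{theorem:1}, calling the composition trivial. Your added check that the monotonicity and score-boundedness hypotheses are jointly satisfiable and that $C_2$ is the same constant throughout is sound diligence, though the paper leaves it implicit.
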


The proof of Theorem~\ref{theorem:3} is trivial due to Theorem~\ref{theorem:1} and Theorem~\ref{theorem:2}. Theorem \ref{theorem:3} demonstrates that the utility regret $\mathcal{L}_r(\Phi, q)$ is bounded by our proposed objective $\mathcal{L}_r^{\prime}(\Phi, q)$ plus a constant. It implies that optimizing our proposed objective is actually minimizing the upper bound of the utility regret, which guarantees the effectiveness of \textit{U-rank} theoretically.

\section{\hbox{Semi-synthetic Experiments}}
The semi-synthetic setup is widely applied in unbiased learning to rank~\cite{Fang2018InterventionHF} which allows us to explore different settings~\footnote{Code for our experiments is available at https://github.com/xydaisjtu/U-rank}.

\subsection{Datasets}
\begin{itemize}
    \item \textbf{Yahoo! LETOR set 1}\footnote{https://webscope.sandbox.yahoo.com} is used in Yahoo! Learning-to-Rank Challenge. 
    The dataset consists of 700 features normalized in $[0,1]$, which are extracted from query-document pairs.
    
    \item \textbf{MSLR-WEB10K}\footnote{https://www.microsoft.com/en-us/research/project/mslr/} is a large-scale dataset released by Microsoft Research. It contains 10,000 queries and 1,200,193 documents. There are 136 features extracted from query-document pairs. 
    
    \item \textbf{Istella-S LETOR}\footnote{http://quickrank.isti.cnr.it/istella-dataset/}~\cite{Lucchese2016PostLearningOO} is one of the largest public available datasets. Istella-S is composed of 33,018 queries, where for each query-document pair there are 220 features.  
\end{itemize}

\subsection{Click Data Generation}
We mainly follow \citet{Fang2018InterventionHF} to generate synthetic click data with item-specific attention bias for the three datasets.
In the following part, \textit{oracle model} refers to this click generation model.
Similar to~\cite{Fang2018InterventionHF}, the attention bias which is related to both position and the item feature is calculated by $P(o_{i}=1|k_i, x_i) = 1/k_i^{\max(w^\top x_i + 1, 0)}$. In our setting, $x_i$ is the set of item features, while in the setting of~\cite{Fang2018InterventionHF}, $x_i$ is the set of query features which is shared among all the items for a same query. The parameter vector $w$ is drawn from a uniform distribution over $[-\eta, \eta)$ and is normalized such that $\sum_{j=1}w_j = 0$.
Following \citet{10.1145/3308558.3313447}, the relevance probability is defined as $P\left(r_{i} = 1\right)=\epsilon+(1-\epsilon) \frac{2^{y_i}-1}{2^{y_{\max }}-1}$, where $y_i$ denotes the relevance label of $x_i$ and $y_{max}$ is the highest level of relevance. $\epsilon$ is set to 0.1, which denotes the CTR of irrelevant documents. The overall CTR is calculated by $P(c_i=1) = P(o_i = 1) \cdot P(r_i = 1)$. The maximal position $k_{max}$ is set to be 10.

\begin{figure*}[hbt]
    \centering
    \includegraphics[width=0.3\textwidth]{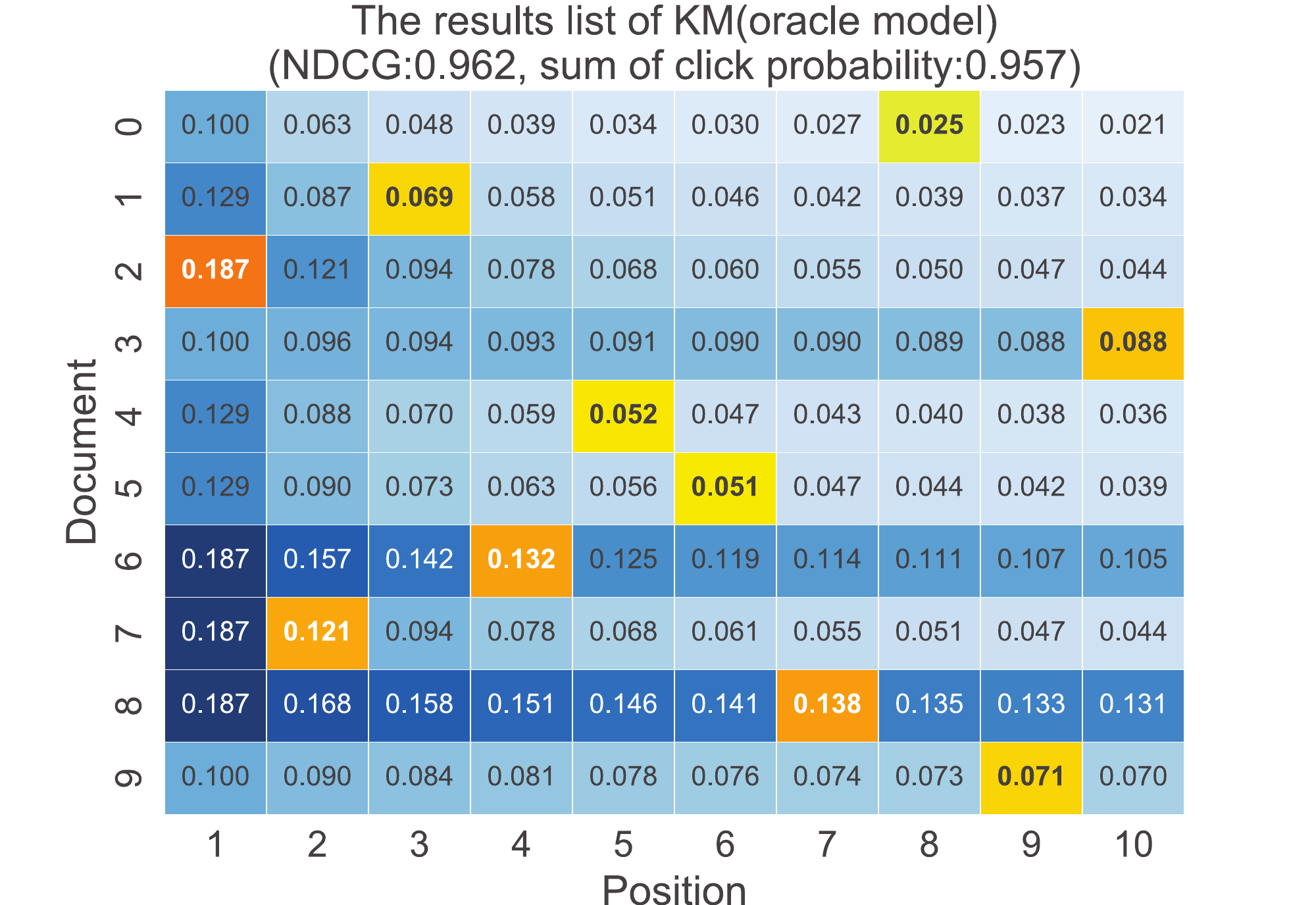}
    \includegraphics[width=0.3\textwidth]{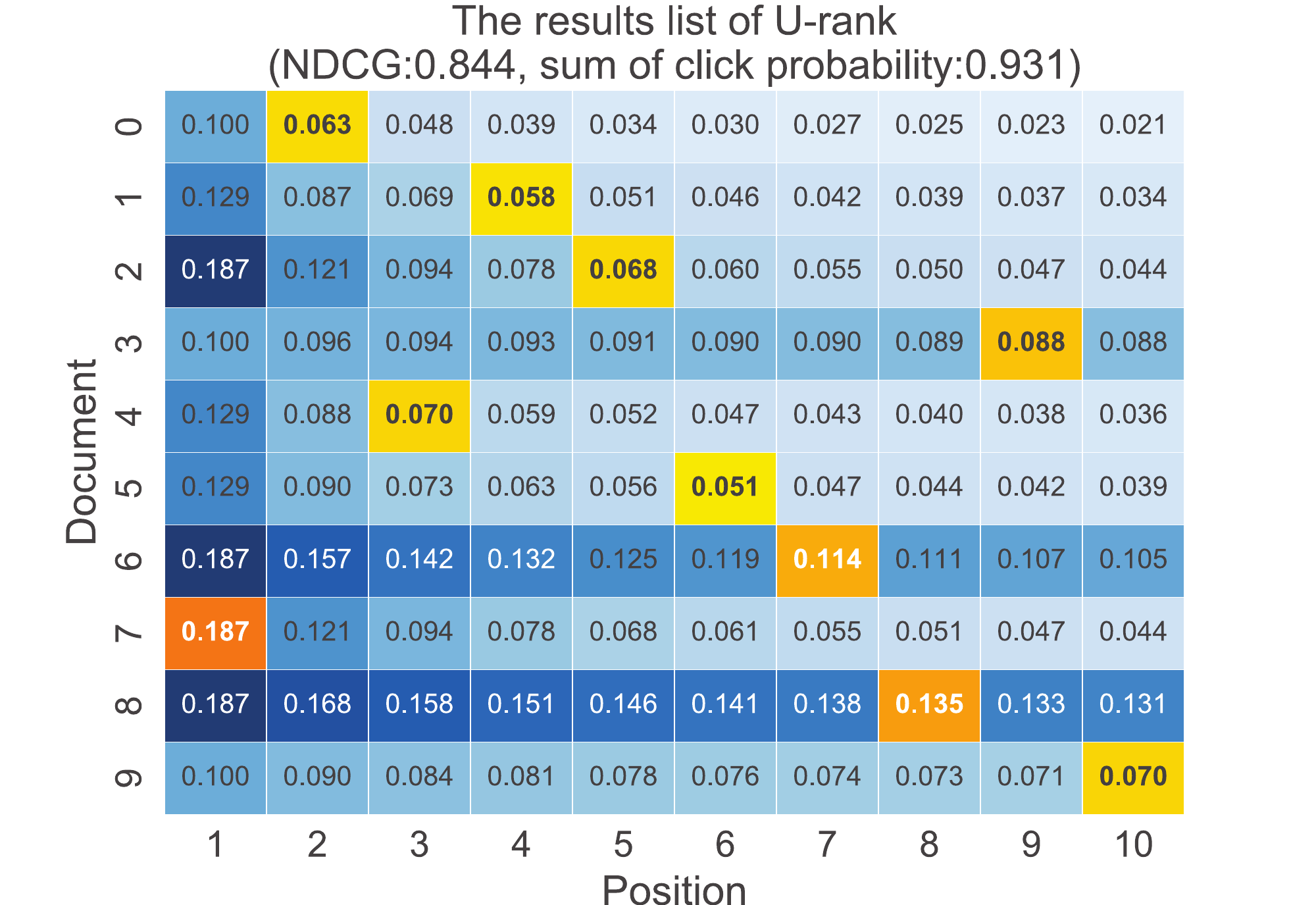}
    \includegraphics[width=0.3\textwidth]{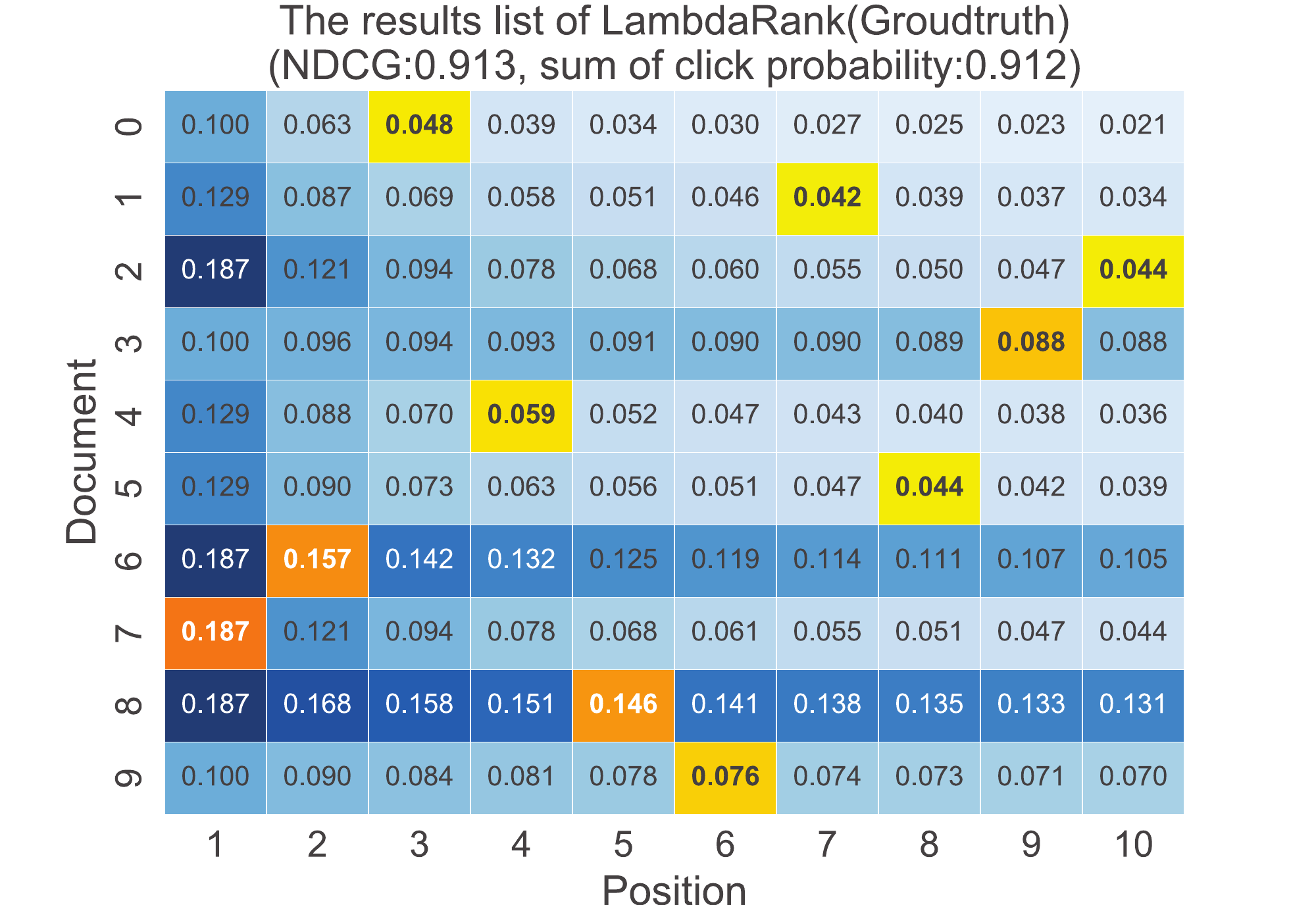}
    \vspace{-7pt}
    \caption{Comparison of the result lists of different methods on the first query of MSLR-WEB10K.}
    \vspace{-10pt}
    \label{fig:result_list}
\end{figure*}

\vspace{-3pt}
\subsection{Baselines}
We implement eight baselines that explore the performance of two standard learning to rank methods (i.e., \textbf{SVMRank}~\cite{svmrank} and \textbf{LambdaRank}~\cite{burges2007learning}), with four propensity estimation methods,
which are detailed as follows. 
(1) \textbf{None} uses the original click data without debiasing. 
(2) \textbf{Randomization}~\cite{Joachims2016UnbiasedLW} estimates propensity with online randomized experiments.
(3) \textbf{CPBM}~\cite{Fang2018InterventionHF} estimates examination probability w.r.t different queries based on intervention harvesting.
(4) \textbf{Groundtruth} uses the groundtruth examination probability for \textit{oracle model} as propensity. The result of this method is the upper bound of the results of other IPS approaches based on the same ranking model. Other methods we implement include 
(5) \textbf{CTR-1}, the position-aware click model used in our framework which assigns position 1 to each item during online inference.
(6) \textbf{DLA}~\cite{10.1145/3209978.3209986} is a dual learning algorithm that jointly learns an unbiased ranker and an unbiased propensity model.
We also explore the performance of \textbf{KM (oracle model)} which solves the maximum-weight graph matching problem via Kuhn-Munkres (KM) algorithm~\cite{kuhn1955hungarian,munkres1957algorithms}, given the groundtruth CTR. It is supposed to produce the best utility we can achieve on the test data.

\vspace{-3pt}

\subsection{Overall Performance}\label{sec:semiresults}
In this section, we assume the utility value of each item to be 1 in order to consistently and fairly compare \textit{U-rank} with existing (unbiased) learning to rank methods.
We evaluate the performance of the baseline approaches and \textit{U-rank} in terms of the relevance based metrics, i.e., \textit{MAP} and \textit{nDCG} (\textit{nDCG} denotes \textit{nDCG@10}), and utility based metrics, i.e., \textit{\# Click} and \textit{CTR}.
Here, The \textit{\# Click} and \textit{CTR} are utility metrics based on oracle click model denoting \textit{clicks per query} and \textit{CTR per document}, respectively. 
The overall performance on the three benchmark datasets is shown in Table~\ref{tab:simulation}.
    \textit{Firstly,} our method \textit{U-rank} achieves consistently the best performance over the state-of-the-art baseline approaches on the utility-based metrics, i.e., \textit{\#Click} and \textit{CTR}. For example, \textit{U-rank} achieves 1\% improvement in Yahoo LETOR set 1 and 8.3\% improvement in MSLR-WEB10K on \textit{CTR} comparing to the best baseline methods~\footnote{The baseline in italic use the information from oracle click model, so they are not included for comparison.}.
    \textit{Secondly,} 
    \textit{U-rank} also outperforms most of the baselines in terms of the relevance based metrics, i.e., \textit{MAP} and \textit{nDCG}, though it does not always perform the best especially on MSLR-WEB10K dataset where the disagreement between utility-based metric and relevance-based metric is larger than that on the other two datasets.
    \textit{Thirdly,} the method \textit{Groundtruth} achieves the best utility among the counterfactual learning approaches, demonstrating the effectiveness of the IPS-based framework when the propensity estimation is accurate. 
    Randomization fails to perform well because it assumes that the user's attention only relates to the position which is not true in our setting where the user's attention relies on both the position and the item feature.
    CPBM achieves the second-best utility among the IPS-based methods since it models the propensity of each query by taking query features into consideration. 
    \textit{Lastly,} \textit{U-rank} and CTR-1 share the same click model. However, \textit{U-rank} outperforms CTR-1 mainly because CTR-1 ranks items by their estimated CTR at position 1, which is suboptimal in case of item-specific attention bias.
    \textit{U-rank} also outperforms DLA since DLA relies heavily on the accuracy of the estimated propensity, which is hard to achieve.

\subsection{Empirical Analysis}
\noindent
\textbf{RQ1: How does our model achieve higher CTR?}
In Figure~\ref{fig:click_dist}, we show the average CTR on each position of \textit{U-rank} and LambdaRank(Groundtruth) , the upper bound of counterfactual learning methods in Table~\ref{tab:simulation}.
We also plot the results of \textit{KM (oracle model)} for reference.

\vspace{-5pt}
\begin{figure}[hbt]
	\centering
	\includegraphics[width=0.235\textwidth]{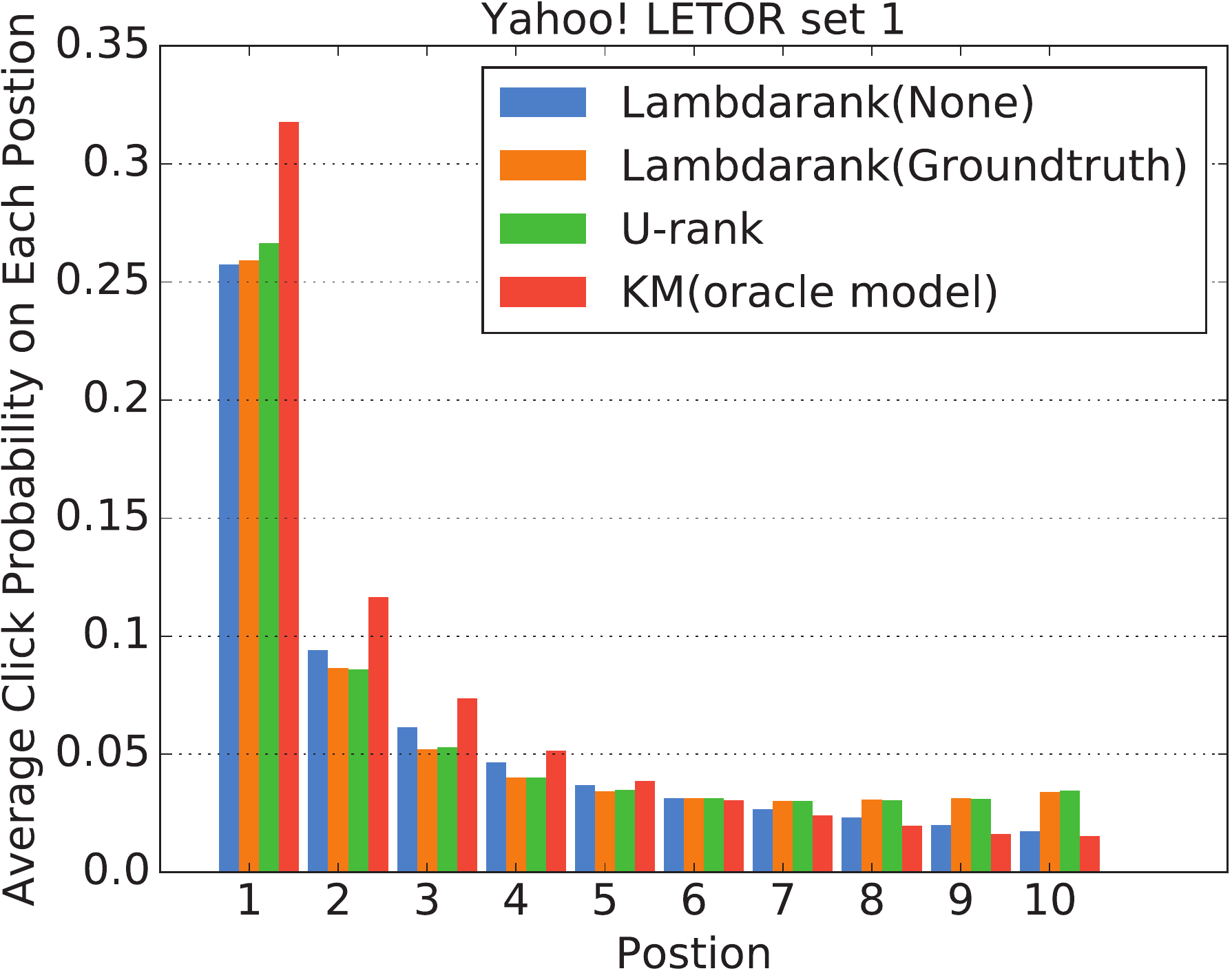}
	\includegraphics[width=0.235\textwidth]{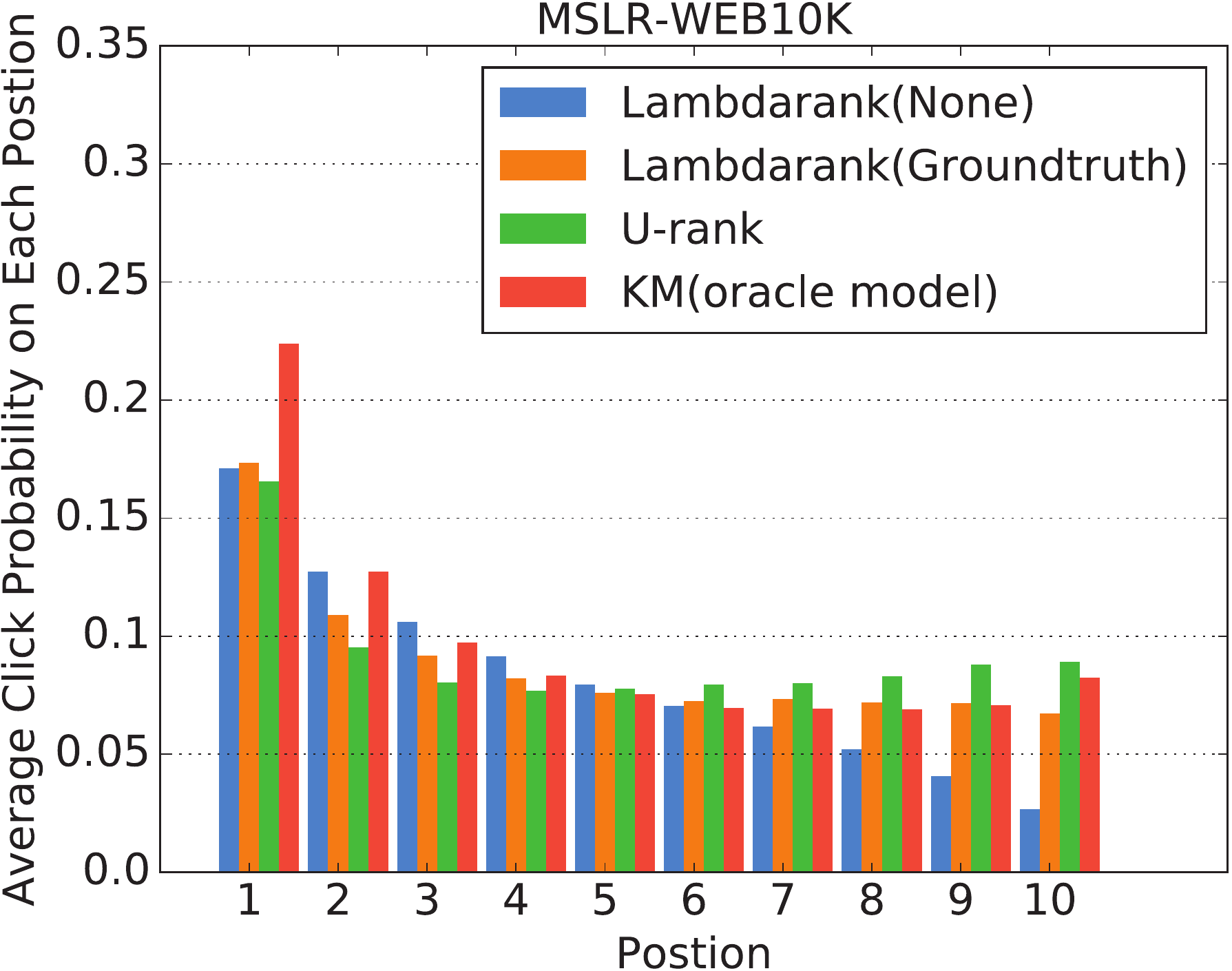}
	\vspace{-18pt}
	\caption{Average CTR on each position.}
	\label{fig:click_dist}
	\vspace{-12pt}
\end{figure}

Firstly, comparing the results of the two datasets in Figure~\ref{fig:click_dist}, we observe a steeper decline of average CTR to positions of the KM (oracle model) method on Yahoo dataset than that on the MSLR dataset. This suggests that on this dataset, positions have a very strong impact on users' click. Thus, to optimize the utility, a well-performed approach should put more relevant items at higher positions.
In MSLR-WEB10K, on the other hand, the average CTR of the optimal matching tends to be equally distributed on the positions, compared to the Yahoo dataset. 
We find that \emph{our method is adaptive to different severity of position bias}. In Yahoo dataset, our model focuses more on top positions than LambdaRank, while in the MSLR dataset, our model learns a flatter distribution. Notably, on both datasets, our model achieves a larger sum of click probabilities over all the positions than LambdaRank. 

Secondly, we analyze the result of a single query in detail. 
The experiment is conducted on the first query of the MSLR-WEB10K dataset.
Figure~\ref{fig:result_list} shows the click probabilities of the ten items for this query and their click probabilities if placed at each position according to our oracle click data generation model. The position of each item assigned by different methods is denoted in orange color. 
We can see that although LambdaRank performs better in nDCG with a groundtruth propensity. It, however, achieves a lower CTR than our method \textit{U-rank}. This is because, similar to the KM (oracle model), \emph{\textit{U-rank} will take the position sensitivity of different items into consideration}. For example, document 6 is of high relevance and relatively not sensitive to the position change. LambdaRank displays it at the second position while our method and KM both display it at a lower position, so that the second position is kept for an item that is more sensitive to the position change.

\vspace{-0pt}
\begin{table}[t]
	\caption{Comparison of two different architecture of the position aware click estimation.}
	\vspace{-10pt}
	\resizebox{0.3\textwidth}{!}{%
		\begin{tabular}{@{}lccccc@{}}
			\toprule
			& \multicolumn{3}{c}{AUC} & \multirow{2}{*}{\begin{tabular}[c]{@{}|c@{}} \# click\end{tabular}} & \multirow{2}{*}{CTR} \\
			\cmidrule(r){1-4}
			& train & test & validation &  &  \\
			\hline
			A1 & 0.695 & 0.684 & 0.687 & 0.903 & 0.915 \\
			A2 & 0.701 & 0.693 & 0.692 & 0.852 & 0.847 \\ \bottomrule
		\end{tabular}%
	}
	\vspace{-10pt}
	\label{tab:click}
\end{table}

\noindent\textbf{RQ2: What kind of architecture should we use to implement the position-aware click estimation?}
We implement two kinds of architecture for the position-aware click estimation. A1 is a neural network, with the item features as input and a $K$-dim vector as output, where the $k$-th dimension denotes the CTR of the item at position $k$, and $K$ denotes the number of positions. 
A2 is also a neural network, with the concatenation of item feature and position in one-hot encoding as its input and a single value as output, representing the CTR of the item at the given position. The result on MLSR-WEB10K is presented in Table~\ref{tab:click}. Although A2 achieves better AUC, we utilize A1 as the click model to pursue higher utility.

\vspace{-5 pt}
\section{Real-world Deployment}
\begin{table*}[t]
	\centering
	\small
	\caption{Comparison of different unbiased learning to rank models on real-world recommendation scenarios.}
	\vspace{-8pt}
	\begin{threeparttable}
	\begin{tabular}{ c| c || c | c | c | c | c | c | c | c } 
		\hline
	    \multicolumn{2}{c||}{\multirow{2}{*}{Ranking model}} & \multicolumn{4}{c |}{Scenario 1 (without bid)} & \multicolumn{4}{c}{Scenario 2 (with bid)} \\ \cline{3-10}
		\multicolumn{2}{c||}{} & \# click & \# click@1 & \# click@3 & \# click@5 & Revenue & Revenue@1 & Revenue@3 & Revenue@5  \\\hline
		\multirow{2}{*}{SVMRank}
		& None & 1.586 & 0.500 & 1.107 & 1.348 & 3.602 & 0.959 & 2.177 & 2.788  \\\cline{2-2}
 		& \textit{Groundtruth} & 1.617 & 0.536 & 1.154 & 1.386 & 3.619 & 1.015 & 2.229 & 2.825 \\ \hline
 		
 		\multirow{2}{*}{LambdaRank}
 		& None & 1.750 & 0.701 & 1.327 & 1.556 & 3.586 & 0.964 & 2.178 & 2.774  \\ \cline{2-2}
 		& \textit{Groundtruth} & 1.826 & 0.781 & 1.429 & 1.640 & 3.637 & 1.009 & 2.245 & 2.837  \\ \hline
 		\multicolumn{2}{c||}{DLA} & 1.665 & 0.624 & 1.338 & 1.520 & 3.631 & 0.958 & 2.233 & 2.827  \\ \hline
 		
 		\multicolumn{2}{c||}{DeepFM} & 1.790 & 0.762 & 1.379 & 1.593 & 3.753 & 1.131 & 2.289 & 2.881  \\ \hline
		\multicolumn{2}{c||}{U-rank} & \textbf{1.859*} & \textbf{0.841*} & \textbf{1.474*} & \textbf{1.676*} & \textbf{3.966*} & \textbf{1.264*} & \textbf{2.607*} & \textbf{3.214*} \\ \hline
		\end{tabular}
	\begin{tablenotes}
		 \item \footnotesize  $*$ denotes statistically significant improvement (measured by t-test with p-value$<$0.05) over all baselines.
	\end{tablenotes}
	\end{threeparttable}	
	\vspace{-8pt}
	\label{tab:huawei_offline}
\end{table*}

In order to verify the effectiveness of our proposed model in real-world applications, we conduct experiments in two recommendation scenarios in Company X's App Store. This App Store has hundreds of millions of daily active users who create hundreds of billions of user logs every day in the form of implicit feedback such as browsing, clicking, and downloading behaviors.

\subsection{Offline Evaluation}
\textbf{Setups.}
We conduct offline experiments based on two recommendation scenarios with different utility settings.
In Scenario 1, we only consider the downloads of the Apps as the utility, while in Scenario 2,  the bid price of each App download needs to be considered.
In both scenarios, we use seven consecutive days' data for training and the eighth day's data for testing. As in the semi-synthetic experiments, we also implement two LETOR methods, \textit{i.e.}, SVMrank and LambdaRank as baselines. The propensity estimation method \textit{Randomization} is not applicable here since we are not allowed to randomly swap two items of a ranked list in a live recommender system. 
Similarly, CPBM is not applicable either since in practice we cannot obtain the ranking results of the same user from multiple rankers at the same time.
Thus, we only compare \textit{U-rank} with the ranker learned with biased click data, i.e., \textit{None}, and the ranker with groundtruth propensity, \textit{i.e.}, \textit{Groundtruth}. 
The groundtruth propensity is the same as the propensity that we use in evaluation in the next paragraph.
\textit{DeepFM} is included as a baseline as it is the production baseline in this App recommendation online system. It trains with position feature and takes default position 1 in the inference stage. 
To make a fair comparison, we perform DeepFM architecture in both click model and ranking model in \textit{U-rank}.

\noindent\textbf{Metrics.} Unlike in the semi-synthetic experiments, here we do not know the underlying user click model. 
Thus, we have to debias the click data generated by a historical ranker with a pre-estimated propensity to obtain the click signals on the new positions.
We estimate the propensity for each category of items from 120 days' click data on random traffic. This category-wise propensity estimation is a coarse approximation of the groundtruth propensity, which is not available. The propensity is defined as $Q(o_i=1|\text{category}_i, k_i)$, where $\text{category}_i$ denotes the category of item $i$. This propensity is only used for evaluation except in the $Groundtruth$ methods in Table~\ref{tab:huawei_offline}, where this propensity is used for debiasing.

The utility in Scenario 1 is defined as the expected number of debiased clicks at the top-$K$ positions in a session, i.e., $\# click@K = \sum_{k_i=1}^{K} \frac{Q(o_i=1|\text{category}_i, k_i)}{Q(o_i=1|\text{category}_i, k_i^h)} c_{i, k_i^h}$. We use $\# click$ to denote the case when $K=n_q$.
The utility in Scenario 2 is defined as the expected revenue at top-$K$ positions in a session after debasing, i.e., $revenue@K = \sum_{k_i=1}^{K} \frac{Q(o_i=1|\text{category}_i, k_i)}{Q(o_i=1|\text{category}_i, k_i^h)} c_{i, k_i^h} \cdot b_i$ where $b_i$ is the bid of item $i$. We use $Revenue$ to denote the case when $K=n_q$.

\noindent\textbf{Results.} The overall performance on the two real-world datasets is shown in Table~\ref{tab:huawei_offline}. We have the following observations.
\textit{Firstly}, \textit{U-rank} achieves the best performance comparing to the state-of-the-art baselines. Specifically, in Scenario 1, \textit{U-rank} achieves 1.8\%, 7.7\%, 3.1\% and 2.2\% improvement over the best baseline method in terms of \#click, \#click@1, \#click@3 and \#click@5, respectively. In the experiment with bid in Scenario 2, the improvements are 2.5\%, 13\%, 6.7\% and 3.9\% in terms of Revenue, Revenue@1, Revenue@3 and Revenue@5, respectively. These results demonstrate the superiority of our approach over the baselines in optimizing the utility, which motivates us to deploy \textit{U-rank} in the live recommender system. 
\textit{Secondly}, \textit{U-rank} performs better than DeepFM because \textit{U-rank} considers item-specific attention bias while DeepFM learns from biased data.
We do not elaborate on the other results since they are consistent with the results in the semi-synthetic experiments.

\begin{figure}
    \centering
    \includegraphics[width=0.48\textwidth]{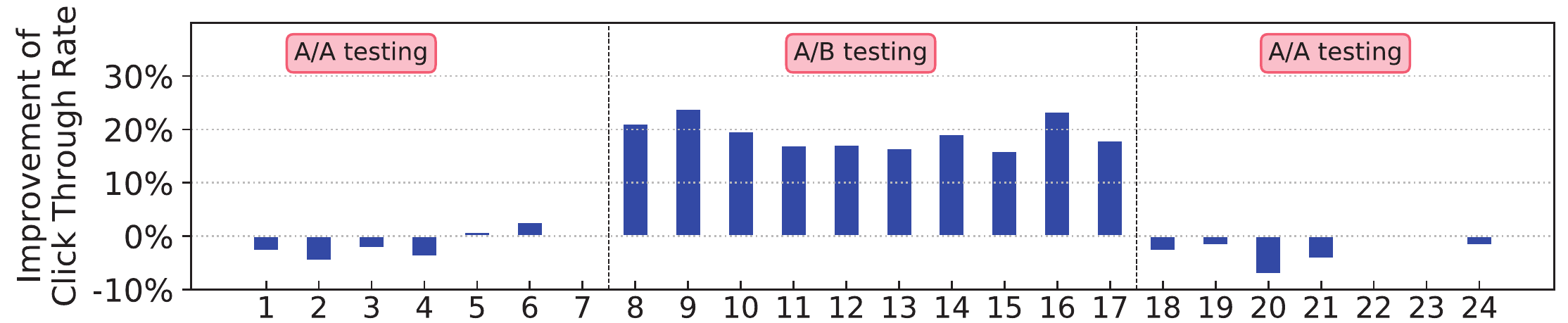}
    \vspace{-8pt}
    \caption{Online experimental results of click through rate.}
    \label{fig:res_live_ctr}
    \vspace{8pt}
    \includegraphics[width=0.48\textwidth]{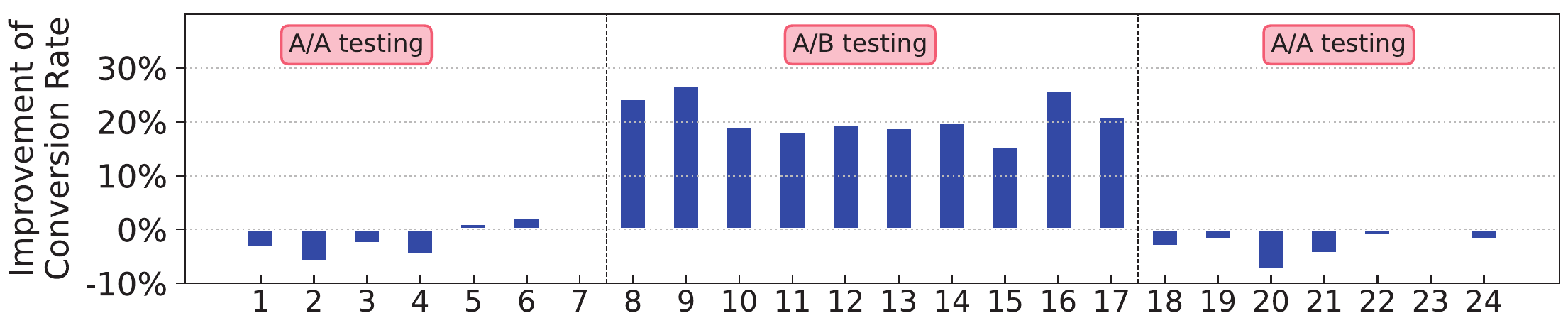}
    \vspace{-8pt}
    \caption{Online experimental results of conversion rate.}
    \vspace{-25pt}
    \label{fig:res_live_cvr}
\end{figure}

\subsection{Online Evaluation}
\textbf{Setups.}
We conduct A/B testing in a recommendation scenario in Company X's App store, comparing the proposed model \textit{U-rank} with the current production baseline  DeepFM~\cite{DBLP:conf/ijcai/GuoTYLH17} that supports multiple scenarios such as ``Must-have Apps'' and ``Novel and Fun''.
The whole online experiment lasts 24 days, from May 6, 2020 to May 29, 2020.
We monitor the results of A/A testing for the first seven days, conduct A/B testing for the following ten days, and conduct A/A testing again in the last seven days.
15\% of the users are randomly selected as the experimental group and another 15\% of the users are in the control group. During A/A testing, all the users are served by DeepFM model \cite{DBLP:conf/ijcai/GuoTYLH17}.
During A/B testing, users in the control group are presented with recommendation by DeepFM, while users in the experimental group are presented with the recommendation by our proposed model \textit{U-rank}. 
Note that the click model of \textit{U-rank} shares the same network architecture and parameter complexity with DeepFM in order to verify whether the improvement is brought by the objective function design of the ranker in \textit{U-rank}. 

To deploy \textit{U-rank}, we utilize a single node with 48 core Intel Xeon CPU E5-2670 (2.30 GHZ), 400 GB RAM and as well as 2 NVIDIA TESLA V100 GPU cards, which is the same as the training environment of the baseline DeepFM. For model training, \textit{U-rank} requires minor changes to the current training procedure due to the pair-wise loss function. For model inference, \textit{U-rank} shares the same pipeline as DeepFM, which means there is no extra engineering work needed in model inference, to upgrade DeepFM model (or other similar deep models) to \textit{U-rank}.

\noindent\textbf{Metrics.}
We examine two metrics in the online evaluation. 
They are \textit{Click-through rate}: $CTR = \frac{\# downloads}{\# impressions}$  and \textit{Conversion rate}: $CVR=\frac{\# downloads}{\# users}$, where \# downloads, \# impressions and \#users are the number of downloads, impressions and visited users, respectively.

\noindent\textbf{Results.}
Figure~\ref{fig:res_live_ctr} and Figure~\ref{fig:res_live_cvr} show the improvement of the experimental group over the control group with respective to CTR and CVR, respectively.
We can see that the system is rather stable where both CTR and CVR fluctuated within 8\% during the A/A testing.
Our \textit{U-rank} model is launched to the live system on Day 8.
From Day 8, we observe a significant improvement over the baseline model with respect to both CTR and CVR.
The average improvement of CTR is 19.2\% and the average improvement of CVR is 20.8\% over the ten days of A/B testing.
These results clearly demonstrate the high effectiveness of our proposed model in improving the total utility which refers to the number of downloads in this scenario.
From Day 18, we conduct A/A testing again to replace our \textit{U-rank} model with the baseline model in the experimental group.
We observe a sharp drop in the performance of the experimental group, which once more verify that the improvement of online performance in the experimental group is indeed introduced by our proposed model.

\vspace{-5pt}
\section{Conclusion}
In this paper, we propose a novel framework \textit{U-rank}, which directly optimizes the expected utility of the ranked list without any extra assumption on relevance nor examination. 
Specifically, \textit{U-rank} first uses a position-aware deep CTR model to perform an unbiased estimation of the expected utility, and then optimizes the objective with an efficient algorithm based on a LambdaRank-like objective.
Extensive studies on three benchmark datasets and two real-world datasets based on different scenarios have shown the effectiveness of our work. We also deploy this ranking framework on a commercial recommender system and observe a large utility improvement over the production baseline via online A/B testing. In future work, we plan to consider other biases like selection bias and propose a more general debiasing framework. 

\vspace{-5pt}
\section*{Acknowledgement}
    The corresponding author Weinan Zhang thanks the support of NSFC (61702327, 61772333, 61632017). The work is also sponsored by Huawei Innovation Research Program.
	\bibliographystyle{ACM-Reference-Format}
	\bibliography{sample-base}

\end{document}